\documentclass[journal]{IEEEtran}

\usepackage{setspace}


\usepackage{graphics,
           psfrag,
           epsfig,
           amsthm,
           cite,
           amssymb,
           url,
           dsfont,
           subfigure,
           algorithm,
           algorithmic,
           balance,
           enumerate,
           color,
           setspace
}
\usepackage{amsmath}

\usepackage{epstopdf}

\newtheorem{lemma}{Lemma}

\newtheorem{theorem}{Theorem}

\newtheorem{remark}{Remark}

\newcommand{\eref}[1]{(\ref{#1})}
\newcommand{\sref}[1]{Section~\ref{#1}}
\newcommand{\appref}[1]{Appendix~\ref{#1}}
\newcommand{\fref}[1]{Figure~\ref{#1}}

\newcommand{\cref}[1]{Constraint~\ref{#1}}
\newcommand{\thref}[1]{Theorem~\ref{#1}}
\newcommand{\lref}[1]{Lemma~\ref{#1}}

\newcommand{\algref}[1]{Algorithm~\ref{#1}}

\hyphenation{op-tical net-works semi-conduc-tor}

\newcommand{\ignore}[1]{}


\epsfxsize=3.0in
\pagestyle{plain}
\IEEEoverridecommandlockouts
%


\usepackage{lettrine}

\addtolength{\textfloatsep}{-2mm}
\setlength{\abovedisplayskip}{0.4mm}
\setlength{\belowdisplayskip}{0.4mm}
\setlength{\abovecaptionskip}{0.4mm}
\setlength{\belowcaptionskip}{0.4mm}
\setlength{\floatsep}{1mm}

\begin{document}

\title{\vspace{-.5cm}Hybrid Radio/Free-Space Optical Design for Next Generation Backhaul Systems}

\author{
\authorblockN{Ahmed Douik, \textit{Student Member, IEEE}, Hayssam Dahrouj, \textit{Senior Member, IEEE},\\ Tareq Y. Al-Naffouri, \textit{Member, IEEE}, and Mohamed-Slim Alouini, \textit{Fellow, IEEE}\vspace{-.8cm}}

\thanks {A part of this paper is published in proc. of IEEE International Conference on Communication (ICC' 15) Workshop Next Generation Backhaul/Front haul Networks (BackNets' 15), London, UK, June 2015.

Ahmed Douik is with the Department of Electrical Engineering, California Institute of Technology, Pasadena, CA 91125 USA (e-mail: ahmed.douik@caltech.edu).

Hayssam Dahrouj is with the Department of Electrical Engineering, Effat University, Jeddah 22332, Saudi Arabia (e-mail: hayssam.dahrouj@kaust.edu.sa).

T. Y. Al-Naffouri is with King Abdullah University of Science and Technology, Thuwal 23955-6900, Saudi Arabia, and also with King Fahd University of Petroleum and Minerals, Dhahran 31261, Saudi Arabia (e-mail: tareq.alnaffouri@kaust.edu.sa).

M.-S. Alouini is with the Division of Computer, Electrical and Mathematical Sciences and Engineering, King Abdullah University of Science and Technology, Thuwal 23955-6900, Saudi Arabia (e-mail: slim.alouini@kaust.edu.sa).
}
}

\maketitle

\IEEEoverridecommandlockouts

\begin{abstract}
The deluge of date rate in today's networks imposes a cost burden on the backhaul network design. Developing cost efficient backhaul solutions becomes an exciting, yet challenging, problem. Traditional technologies for backhaul networks include either radio-frequency backhauls (RF) or optical fibers (OF). While RF is a cost-effective solution as compared to OF, it supports lower data rate requirements. Another promising backhaul solution is the free-space optics (FSO) as it offers both a high data rate and a relatively low cost. FSO, however, is sensitive to nature conditions, e.g., rain, fog, line-of-sight. This paper combines both RF and FSO advantages and proposes a hybrid RF/FSO backhaul solution. It considers the problem of minimizing the cost of the backhaul network by choosing either OF or hybrid RF/FSO backhaul links between the base-stations (BS) so as to satisfy data rate, connectivity, and reliability constraints. It shows that under a specified realistic assumption about the cost of OF and hybrid RF/FSO links, the problem is equivalent to a maximum weight clique problem, which can be solved with moderate complexity. Simulation results show that the proposed solution shows a close-to-optimal performance, especially for practical prices of the hybrid RF/FSO links.
\end{abstract}

\begin{keywords}
Network planning, optical fiber, free-space optic, backhaul network design, cost minimization.
\end{keywords}

\section{Introduction}\label{sec:int}

\lettrine[lines=2]{C}{ellular} networks, flooded by an enormous demand for mobile data services, are expected to undergo a fundamental transformation. In order to significantly increase the data capacity, coverage performance, and energy efficiency, next generation mobile networks ($5$G) \cite{546843} are expected to move from the traditional single, high-powered base-station (BS) to the deployments of multiple overlaying access points of diverse sizes, i.e., microcell, picocell, femtocell, etc., using different radio access technologies. To efficiently manage the resulting high levels of interference, connecting BSs through efficient backhauling becomes a critical component in the network planning. The big increase in small cell deployment, however, necessitates a considerable amount of backhaul communications in order to share the data streams between all BSs across the network \cite{6736746}. Giving that the links are capacity limited, upgrading the backhaul and increasing its ability to support the tremendous amount of data is a necessity \cite{13050958}. Therefore, choosing the suitable technology(ies) and design of the backhaul network is of great interest, especially that its deployment cost is a dominant cost driver for many operators, e.g., approximately between $30\%$ and $50\%$ of the total operating costs for $4$G systems  \cite{591842}. With the deployment of multiple small cells expected in $5$G, the implementation costs are believed to get even higher \cite{13050958}. This paper proposes a cost efficient backhaul solution for next generation backhaul systems using techniques from graph theory.

\subsection{Backhaul Technologies}

Traditional technologies for the backhaul network design include copper, microwave radio links (RF), and optical fibers (OF). The leased T$1$/E$1$ copper lines is the most widely used backhaul technology with approximately $90\%$ of the total backhaul deployment in the US \cite{5473878}. With a provided data rate of $1.544$ Mbit/s for T$1$ and $2.048$ Mbit/s for E$1$ \cite{6777766}, copper lines provide satisfying data rates for voice traffic for $2$G networks. However, to achieve the data rate demand of $3$G traffic and beyond, multiple parallel connections are required which results in a price growing linearly with the provided capacity. For high data rates, copper lines become expensive and hence not a suitable solution for the backhaul upgrade of next generation systems, i.e., $5$G.

Microwave radio is the second most used technology for the backhaul network design as it represents $6\%$ of the total used transport media \cite{5473878} in the US. The RF technology represents a reasonable alternative to copper, especially in locations in which the deployment of wired connections is challenging. However, such solution requires an initial investment in the licensed part of the spectrum \cite{5185525}. Moreover, low frequency radio (radio waves below $6$ GHz) is limited in terms of data rates due to interference problems and high frequency radio (microwave \& millimetre wave (mmwave) from $6$ to $300$ GHz) are limited in the transmission coverage area.

Optical fiber (OF) backhaul links provide the highest rates over long distances, e.g., $155.52$ Mbit/s for STM-$1$, $622$ Mbit/s for STM-$4$, $2.4$ Gbit/s STM-$16$, and $9.9$ Gbit/s for STM-$64$ \cite{5473878}. However, as they are expensive to be deployed and require a considerable initial investment \cite{6226966}, they represent $4\%$ of the total backhaul deployment in the US \cite{5473878}. They, further, suffer from the drawbacks of wired connections, i.e., deployment is not always feasible, which restricts their utilization in particular applications.

Recently, the free-space optics technology (FSO) emerges as a substitute \cite{5771213} for next generation backhauls. An FSO link refers to a laser beam between a pair of photo-detector transceivers using the free-space as medium of transportation. Giving that its wavelength is in the micrometer range, which is an unlicensed band, FSO links are not only free to use but also immune to electromagnetic interference generated by the RF links. The high bandwidth and interference immunity features make an FSO link up to $25$ fold more efficient than an RF link in terms of capacity \cite{1495057}. FSO particularly represents a cost-efficient solution compared to OF.

In contrast with the omni-reliability of the OF and RF links, FSO links are sensitive to weather conditions, e.g., fog, snow, and rain \cite{65841515}. Therefore, reliability becomes an important factor to address for the design of FSO-based backhaul networks. To cope with such varying reliability, and combine the advantages of RF (reliability) and FSO (capacity), the hybrid RF/FSO backhaul becomes an attractive cost-effective and reliable solution. Hybrid RF/FSO transmits, when possible, simultaneously on both the RF and FSO links. In harsh weather conditions that affect the FSO link, the data is sent solely on the RF link \cite{6876609}. Moreover, hybrid RF/FSO transceivers can be quickly deployed over several kilometers \cite{16546169} and can also be easily combined with OF links \cite{5545666}. For all previously mentioned benefits, hybrid RF/FSO is a graceful complementary option for upgrading the existing backhaul network \cite{254515158}, as further shown in our paper.

\subsection{Related Work}

In the past few years, hybrid RF/FSO attracted a significant amount of research. Most of the current work \cite{6844864,168714992012} focus on the determination of the factors affecting the FSO link performance, e.g, weather conditions, scintillation, ect., and finding solutions to improve the quality, e.g., use of multiple lasers and multiple apertures, etc. However, fundamental problems of hybrid RF/FSO architecture optimization for the backhaul network topology design are only at their beginning.

The authors in \cite{1495122} design an efficient and scalable algorithm to optimize a given physical layer objective for $2$ and $3$ optical transceivers per node with a minimum number of links. The authors in \cite{Smadi:09,4609027} propose upgrading an RF network by optimally deploying the minimal number of FSO transceivers so as to achieve a given throughput. While Kashyap {\it{et al}} \cite{1495057} design a routing algorithm for hybrid RF/FSO networks that backs up the traffic to the FSO routes when the RF links could not carry it, Rak {\it{et al}} \cite{6876609} introduce a linear integer programming model to determine routing in hybrid RF/FSO network in which the FSO link availability is varying with the weather conditions. In \cite{4357553}, the authors consider a hybrid RF/FSO system in which the RF and FSO links operate at different data rates. They derive an upper bound for the capacity per node that is asymptotically achievable for random networks.

Numerous mixed integer programming model are proposed to formulate the problem of backhaul network design using hybrid RF/FSO technology. In particular, Son {\it{et al}} \cite{5462107} present an algebraic connectivity-based formulation for the design of the backbone of wireless mesh networks with FSO links and solve it using a greedy approach that iteratively inserts nodes to maximize the algebraic connectivity. The authors in \cite{4746591} propose to maximize the network throughput by installing as many FSO links as possible under the constraint that the number of FSO links in a node is bounded. In \cite{6134071}, Ahdi {\it{et al}} introduce a mixed integer programming model to find the optimal placement of FSO links in order to upgrade an existing RF backhaul network. Similarly, reference \cite{4609027} propose to improve an existing RF backhaul network with FSO links using the minimum number of FSO links to guarantee a target network throughput when RF links are non-available due to interference.

This paper suggests upgrading a pre-deployed OF backhaul network using hybrid RF/FSO links, and, hence, is related to the concept developed in \cite{6777766,6844494,2514014}. The authors in \cite{6777766} consider the upgrade of a pre-deployed OF backhaul network using FSO links and mirrors for nodes not in line-of-sight of each other. For two link-disjoint paths networks, they formulate the problem as a mixed integer programming and extend the study in \cite{6844494} to $K$ link-disjoint paths. In \cite{2514014}, the same group of authors analyze the impact of the parameter $K$ on the design. This paper extends the concept by suggesting using hybrid RF/FSO links, considering a minimum reliability constraint, and proposing a close-to-optimal explicit solution using graph theory techniques.

\subsection{Contributions}

This paper examines the problem of upgrading a pre-deployed OF backhaul network. It considers the problem of minimizing the cost of the backhaul network by choosing either OF or hybrid RF/FSO backhaul links between the base-stations (BS) so as to satisfy data rate, connectivity, and reliability constraints. Unlike our recent works which focus on connecting BSs through $K$ link-disjoint paths in order to cope with possible link failures (see \cite{Dahrouj_backnet_magazine} for the business case of the RF backhaul, and \cite{Hybrid_Douik_ICC16} for the technical details of the proposed resilient solution), a primary concern of the current paper is to guarantee network connectivity achieved by connecting each pair of nodes in the network, possibly via multiple hops. While the deployment cost of hybrid RF/FSO links depends mainly on the expense of the hybrid RF/FSO transceivers, the implementation cost of OF links depends mostly on the distance between the two end nodes. On the other hand, OF links always satisfy the data rate and reliability constraint. The performance of hybrid RF/FSO, however, degrades with the distance and the number of installed links. The paper solves the problem using graph theory techniques by introducing the corresponding planning graph. The paper's main contribution is to provide a close-to-optimal explicit solution to the problem. The paper shows that under a specified realistic assumption about the cost of OF and hybrid RF/FSO links, the problem can be reformulated as a maximum weight clique problem, that can be globally solved using efficient algorithms \cite{9874286,16513519,13265492,6607889}.

The rest of this paper is organized as follows: \sref{sec:sys} presents the considered system model and the problem formulation. In \sref{sec:prob}, the planning problem is approximated by a more tractable one. \sref{sec:prop} illustrates the proposed solution. Before concluding in \sref{sec:conc}, simulation results are presented in \sref{sec:sim}.

\section{System Model and Problem Formulation}\label{sec:sys}

\subsection{System Model and Parameters}

\begin{figure}[t]
\centering
\includegraphics[width=.6\linewidth]{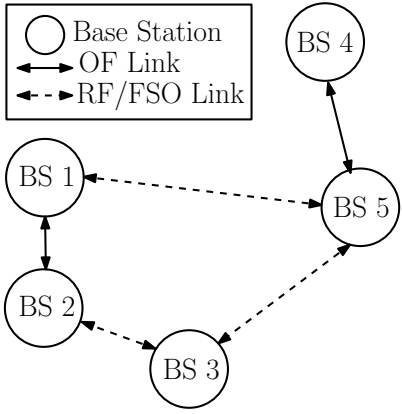}\\
\caption{Network containing $5$ base-stations connected with OF and hybrid RF/FSO links.}\label{fig:network}
\end{figure}

This paper considers a backhaul network connecting a set $\mathcal{B}=\{b_1,\ \cdots,\ b_M\}$ of $M$ base-stations with pre-deployed OF connections. All nodes (interchangeably denoting base-stations) are assumed to have a line-of-sight connections. Therefore, each node can be connected to any other node with either an OF or a hybrid RF/FSO connection, as in \fref{fig:network} which shows a network containing $5$ base-stations.

Let $P_{ij}, 1 \leq i,j \leq M$ be a binary variable indicating the existence, i.e., $P_{ij}=1$, of a pre-deployed OF link between base-stations $b_i$ and $b_j$ and $P_{ij}=0$ otherwise. For simplicity of notations, $P_{ij}$ may be also written as $P(b_i,b_j)$.

Let $\pi^{(O)}_{ij}$ and $\pi^{(h)}_{ij}$ the cost an OF and a hybrid RF/FSO link between nodes $b_i$ and $b_j$, respectively. Obviously, the functions $\pi^{(O)}_{ij}$ and $\pi^{(h)}_{ij}$ are positive and symmetric, e.g., $\pi^{(O)}_{ij}=\pi^{(O)}_{ji}$. Finally, as hybrid RF/FSO is a cost effective solution, the paper assumes that $\pi^{(h)}_{ij} \leq \pi^{(O)}_{ij} , \forall \ i,j \in \mathcal{B}$.

Let $D^{(O)}_{ij}$ and $D^{(h)}_{ij}$ be the provided data rates of an OF and a hybrid RF/FSO links between nodes $b_i$ and $b_j$, respectively. Let $D_t$ be the targeted data rate. Since the OF provides high data rates, without loss of generality, this work assumes that $D^{(O)}_{ij} \geq D_t, \forall \ i \neq j \in \mathcal{B}$.

Similarly, let $R^{(O)}_{ij}$ and $R^{(h)}_{ij}$ be the reliability of an OF and a hybrid RF/FSO links between nodes $b_i$ and $b_j$, respectively. Let $\alpha$ be the targeted reliability. As OF links are always reliable, this paper assumes that $R^{(O)}_{ij} \geq \alpha, \forall \ i \neq j \in \mathcal{B}$.

\subsection{Problem Formulation}

Let $X_{ij}, 1 \leq i,j \leq M$ be a binary variable indicating if base-stations $b_i$ and $b_j$ are connected with an OF connection. Similarly, let $Y_{ij}, 1 \leq i,j \leq M$ indicate if they are connected with a hybrid RF/FSO link. To simplify the problem formulation and constraints, this paper takes as a convention that $X_{ii}=Y_{ii}=0, \forall \ i \in \mathcal{B}$ in all the rest of the equations.

This paper considers the problems of minimizing the network deployment cost under the following constraints:
\begin{enumerate}
\item C1: Some nodes have pre-deployed OF links.
\item C2: Connections between nodes can be either OF or hybrid RF/FSO.
\item C3: Each node has a data rate that exceeds the targeted data rate.
\item C4: The reliability of each node exceeds the targeted reliability.
\item C5: Each node can communicate with any other node through single or multiple hop links.
\end{enumerate}

Let the Laplacian matrix $L$ be defined as $\mathbf{L} = \mathbf{D}-\mathbf{C}$, where $\mathbf{D}=\text{diag}(d_1,\ \cdots,\ d_M)$ is a diagonal matrix with $d_i = \sum_{j=1}^M X_{ij}+Y_{ij}$ and $c_{ij} = X_{ij}+Y_{ij}$. The diagonalization of the Laplacian matrix is given by $\mathbf{L}= \mathbf{Q}\mathbf{\Lambda} \mathbf{Q}^{-1}$, where $\mathbf{\Lambda}= \text{diag}(\lambda_1,\ \cdots,\ \lambda_M)$ with $\lambda_1 \leq \lambda_2 \leq \cdots \leq \lambda_M$. The connectivity condition C5 of the matrix can be written using the algebraic formulation proposed in \cite{25181258} as $\lambda_2 > 0$.

The following lemma introduces the cost-efficient backhaul design problem formulation:
\begin{lemma}
The problem of minimizing the cost of the backhaul network planning can be formulated as:
\begin{subequations}
\label{Original_optimization_problem}
\begin{align}
\min & \ \cfrac{1}{2} \sum_{i=1}^M \sum_{j=1}^M X_{ij}\pi^{(O)}_{ij} + Y_{ij}\pi^{(h)}_{ij} \\
{\rm s.t.\ } & X_{ij} = X_{ji} \label{eq:1} \\
&Y_{ij} = Y_{ji} \label{eq:2} \\
&X_{ij}P_{ij} = P_{ij} \label{eq:np} \\
&X_{ij}Y_{ij} = 0 \label{eq:3} \\
&\sum_{j=1}^M X_{ij}D_t + Y_{ij}D^{(h)}_{ij} \geq D_t\label{eq:5} \\
&1 - \prod_{j=1}^M (1 - X_{ij} \alpha)(1 - Y_{ij} R^{(h)}_{ij}) \geq \alpha \label{eq:nr} \\
&\lambda_2 > 0 \label{eq:6} \\
&X_{ij},Y_{ij} \in \{0,1\},\ 1 \leq i,j\leq M \label{eq:4},
\end{align}
\end{subequations}
where the optimization is over both binary variables $X_{ij}$ and $Y_{ij}$.
\label{l1}
\end{lemma}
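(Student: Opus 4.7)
The plan is to show that the objective and each constraint in \eref{Original_optimization_problem} is a faithful translation of one item in the informal problem description and of the requirements C1--C5, so that the formulation captures the planning problem exactly. The verification is clause-by-clause, and I would follow the order in which the constraints are listed.

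First, I would handle the objective and the structural constraints. Because $X_{ij}$ and $Y_{ij}$ are indexed by ordered pairs while the physical links are undirected, every installed OF (resp.\ hybrid) link between two distinct nodes is counted twice in the double sum, once as $(i,j)$ and once as $(j,i)$; the convention $X_{ii}=Y_{ii}=0$ kills the diagonal contributions. Hence the factor $1/2$ converts the double sum into the true deployment cost. Constraints \eref{eq:1} and \eref{eq:2} record that both link variables are symmetric, \eref{eq:4} their binary nature, \eref{eq:np} enforces C1 because $P_{ij}=1$ forces $X_{ij}=1$ while $P_{ij}=0$ makes the equation vacuous, and \eref{eq:3} together with binarity prevents installing both link types on the same pair, which is C2.

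Next, I would justify the data-rate constraint \eref{eq:5}. Using the assumption $D^{(O)}_{ij}\geq D_t$, any single installed OF link on pair $(i,j)$ already meets the per-node rate target, so replacing the actual rate $D^{(O)}_{ij}$ by the threshold $D_t$ in the coefficient of $X_{ij}$ is equivalent at the level of feasibility: the aggregate on the left-hand side of \eref{eq:5} exceeds $D_t$ at node $b_i$ if and only if $b_i$ has at least one OF neighbour or a set of hybrid neighbours whose rates sum to at least $D_t$, which is exactly C3.

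For the reliability constraint \eref{eq:nr} I would model link outages as independent Bernoulli events and use the lower bound $R^{(O)}_{ij}\geq\alpha$ as a conservative proxy for the OF reliability in the coefficient of $X_{ij}$. Under these conventions, the factor $(1-X_{ij}\alpha)(1-Y_{ij}R^{(h)}_{ij})$ equals $1$ when no link is installed on pair $(i,j)$, $1-\alpha$ when only an OF link is installed, and $1-R^{(h)}_{ij}$ when only a hybrid link is installed; the mutually exclusive case $X_{ij}=Y_{ij}=1$ is ruled out by \eref{eq:3}. The product over $j$ is then the probability that every link incident to $b_i$ is down, and one minus this product is the probability that $b_i$ retains at least one operational neighbour, so requiring it to be at least $\alpha$ matches C4. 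I expect this step to be the main obstacle, because it simultaneously packages an independence assumption, the bound $R^{(O)}_{ij}\geq\alpha$, and the exclusion provided by \eref{eq:3}; once these are stated explicitly, the expression reduces to standard reliability algebra. Finally, C5 follows from the classical Fiedler fact that the multiplicity of the zero eigenvalue of the graph Laplacian $\mathbf{L}=\mathbf{D}-\mathbf{C}$ equals the number of connected components, so $\lambda_2>0$ is equivalent to connectedness, which is exactly \eref{eq:6}, completing the equivalence between C1--C5 and the optimisation problem.
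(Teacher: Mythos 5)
Your proposal is correct and follows essentially the same route as the paper's proof in Appendix~\ref{ap1}: a clause-by-clause translation of the objective and of C1--C5, including the same substitutions $D^{(O)}_{ij}\to D_t$ and $R^{(O)}_{ij}\to\alpha$ justified by the standing assumptions, and the Fiedler characterization of connectivity via $\lambda_2>0$. The only item you omit is the paper's side remark that the objective also counts the pre-deployed OF links, which is harmless since that term is a constant.
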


\begin{proof}
To formulate the problem, the objective function and the system constraints C1 to C5 are expressed in terms of the variables $X_{ij}$ and $Y_{ij}$. Combining all the expressions yields the optimization problem \eref{Original_optimization_problem}. The complete proof can be found in \appref{ap1}.
\end{proof}

The optimization problem \eref{Original_optimization_problem} is equivalent to a weighted Steiner tree problem which is NP-hard with a complexity of order $2^{M^2}$. The optimal solution to such problem is referred to as the optimal planning. The rest of this paper proposes an efficient method to solve the problem \eref{Original_optimization_problem}, under the assumption that the hybrid RF/FSO connection between two nodes that are not neighbors is always more expensive than the cost of the OF connections between each node and its closest neighbour. The rationale for such assumption is that, for short distances, OF links are much cheaper than hybrid RF/FSO ones. Under this assumption, the next section shows that the solution for backhaul network design becomes mathematically tractable with a complexity of order $2^{M}$.

\section{Problem Approximation} \label{sec:prob}

As highlighted above, the original optimization problem \eref{Original_optimization_problem} is an NP-hard problem. The difficulty in solving the problem lies particularly in the structure of constraint \eref{eq:6} and in simultaneously optimizing \eref{Original_optimization_problem} over both binary variables $X_{ij}$ and $Y_{ij}$. This section presents an efficient heuristic to solve the problem under the assumption that the hybrid RF/FSO connection between two nodes that are not neighbors is always more expensive than the OF connections between each node and its closest neighbour. The assumption is motivated by the fact that, for short distances, OF links are much cheaper than hybrid RF/FSO ones. The heuristic is based on first finding the solution to the problem when only OF links can be used. Afterwards, it solves an approximate of the backhaul network planning problem via relating problem \eref{Original_optimization_problem} to solution reached by the planning problem when only OF links are allowed.

\subsection{Optimal Planning Using Optical Fiber Only}

The following lemma introduces the reduced problem when only OF links are allowed.
\begin{lemma}
The problem of backhaul design with minimum cost, when only OF links are allowed, is the following:
\begin{subequations}
\label{eq:11}
\begin{align}
\min& \ \cfrac{1}{2} \sum_{i=1}^M \sum_{j=1}^M X_{ij} \pi^{(O)}_{ij} \\
{\rm s.t.\ }& X_{ij} = X_{ji} \label{eq:9} \\
&X_{ij}P_{ij} = P_{ij} \label{eq:np2} \\
&\lambda_2 > 0 \label{eq:8} \\
&X_{ij} \in \{0,1\},\ 1 \leq i,j\leq M \label{eq:10}.
\end{align}
\end{subequations}
\label{l2}
\end{lemma}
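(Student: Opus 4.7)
The plan is to specialize the master optimization problem \eref{Original_optimization_problem} by forcing all hybrid RF/FSO indicators to zero, and then to show that, in this restricted regime, the remaining constraints either collapse trivially or are implied by connectivity, so that the problem reduces cleanly to \eref{eq:11}. The starting point is therefore to set $Y_{ij} = 0$ for all $1 \leq i,j \leq M$ in \eref{Original_optimization_problem}; under this substitution the objective becomes $\tfrac{1}{2}\sum_{i,j} X_{ij}\pi^{(O)}_{ij}$, the symmetry constraint \eref{eq:2} is void, the mutual-exclusivity constraint \eref{eq:3} is automatically $0$, and the Laplacian entries reduce to $c_{ij} = X_{ij}$ so that \eref{eq:6} keeps the same interpretation but only in terms of the $X_{ij}$. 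The binary constraint \eref{eq:4}, the OF symmetry \eref{eq:1}, and the pre-deployment constraint \eref{eq:np} transfer directly to \eref{eq:9}--\eref{eq:10}.

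The remaining work is to handle the two constraints \eref{eq:5} and \eref{eq:nr} that still appear non-trivially after the substitution. For the data rate, with $Y_{ij}=0$ the inequality \eref{eq:5} becomes $D_t\sum_{j}X_{ij} \geq D_t$, i.e., $\sum_j X_{ij} \geq 1$ for each $i$, meaning every node must have at least one incident OF edge. For the reliability \eref{eq:nr}, if the node $b_i$ has at least one OF neighbour $j^\star$, then the product contains the factor $(1 - X_{ij^\star}\alpha) = 1 - \alpha$, so $1 - \prod_{j}(1 - X_{ij}\alpha) \geq 1 - (1-\alpha) = \alpha$, i.e., \eref{eq:nr} is satisfied as soon as each node has degree at least one.

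The crux of the argument is then to show that both of these simplified conditions are already enforced by the algebraic connectivity requirement $\lambda_2 > 0$ of \eref{eq:8}. This is where I would invoke the standard spectral-graph-theory fact used earlier via \cite{25181258}: $\lambda_2 > 0$ characterizes connectedness of the underlying graph, and in any connected graph on $M \geq 2$ vertices every vertex has degree at least one, so $\sum_j X_{ij} \geq 1$ for all $i$. Consequently, both \eref{eq:5} and \eref{eq:nr} become redundant and can be dropped from the formulation without enlarging the feasible set.

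The main obstacle I anticipate is a minor but notationally annoying edge case, namely $M=1$, where the connectivity constraint is vacuous and the data rate/reliability constraints would degenerate; this can be dispatched by observing that the backhaul planning problem is non-trivial only for $M \geq 2$, or equivalently by appealing to the convention $X_{ii}=0$ already stated in the problem formulation so that isolated single-node networks are excluded from consideration. Once the redundancy of \eref{eq:5} and \eref{eq:nr} is established, collecting the surviving constraints \eref{eq:9}, \eref{eq:np2}, \eref{eq:8}, \eref{eq:10} with the reduced objective yields exactly \eref{eq:11}, which concludes the proof.
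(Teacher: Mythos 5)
Your proposal is correct and follows essentially the same route as the paper's own proof: substitute $Y_{ij}=0$, observe that \eref{eq:2} and \eref{eq:3} become vacuous, and use the fact that connectivity ($\lambda_2>0$) forces every node to have at least one incident OF edge, which alone satisfies both the data rate and reliability constraints since $D^{(O)}_{ij}\geq D_t$ and $R^{(O)}_{ij}\geq\alpha$. Your explicit treatment of the product in \eref{eq:nr} and of the $M=1$ edge case merely makes the paper's one-line redundancy argument more precise.
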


\begin{proof}
To show this lemma, it is sufficient to show that, for backhaul network using only OF links, i.e., $Y_{ij}=0,\forall \ i,j$, constraints \eref{eq:2}, \eref{eq:3}, \eref{eq:5}, and \eref{eq:nr} become redundant. This can easily be done by noting that for a connected graph, each node is connected to, at least, another node. Since only a single OF connection is sufficient to ensure throughput and reliability, the constraints become redundant.
\end{proof}

To solve the problem mentioned above, the paper proposes to cluster BSs, according to the minimal price. First, a cluster $\mathcal{Z}$ containing all BSs is formed. For each connected nodes with pre-deployed OF links, the base-stations are merged into a single group (belonging to the big cluster $\mathcal{Z}$) and the corresponding $X_{ij}$ set to $1$. Afterwards, find the two minimum-price clusters and merge them into a single group. The cost between two clusters is defined as the minimum cost between all BS in each cluster. When two clusters are merged, the two minimum-price BSs in each cluster are connected through an OF link. The process is repeated until only one group remains in the system. In other words, the process terminates when all nodes are merged into a single cluster, i.e., $|\mathcal{Z}|=1$. The steps of the algorithm are summarized in \algref{alg1}. The following theorem characterizes the solution produced by \algref{alg1} with respect to the problem defined in \lref{l2}:

\begin{algorithm}[t]
\begin{algorithmic}
\REQUIRE $\mathcal{B}$, $P_{ij}$, and $\pi^{(O)}$.
\STATE Initialize $X_{ij}=P_{ij},\ 1 \leq i,j\leq M $.
\STATE Initialize $\mathcal{Z} = \varnothing$.
\FORALL {$b \in \mathcal{B}$}
\STATE Initialize $t=0$.
\FORALL {$Z \in \mathcal{Z}$}
\FORALL {$b^{\prime} \in Z$}
\IF{$P(b,b^{\prime}) = 1$}
\STATE $\mathcal{Z} = \mathcal{Z} \setminus \{Z\}$.
\STATE $Z = \{Z,b\}$.
\STATE $\mathcal{Z} = \{\mathcal{Z},\{Z\}\}$.
\STATE $t=1$.
\ENDIF
\ENDFOR
\ENDFOR
\IF{$t=0$}
\STATE $\mathcal{Z} = \{\mathcal{Z},\{b\}\}$.
\ENDIF
\ENDFOR
\WHILE {$|\mathcal{Z}| > 1$}
\STATE $(Z_i,Z_j) = \arg \min\limits_{\substack{Z,Z^{\prime} \in \mathcal{Z} \\ Z \neq Z^{\prime}}} \left[ \min\limits_{\substack{b \in Z \\ b^{\prime} \in Z^{\prime}}} \pi^{(O)}(b , b^{\prime}) \right] $.
\STATE $(b_i,b_j) = \arg \min\limits_{\substack{b \in Z_i \\ b^{\prime} \in Z_j}} \pi^{(O)}(b , b^{\prime})$.
\STATE $X_{ij} = X_{ji} = 1$.
\STATE $\mathcal{Z} = \mathcal{Z} \setminus \{Z_i\}$.
\STATE $\mathcal{Z} = \mathcal{Z} \setminus \{Z_j\}$.
\STATE $\mathcal{Z} = \{\mathcal{Z},\{Z_i,Z_j\}\}$.
\ENDWHILE
\end{algorithmic}
\caption{Optimal planning using only OF links}
\label{alg1}
\end{algorithm}

\begin{theorem}
The solution reached by \algref{alg1} is the optimal solution to the problem proposed in \lref{l2}. Such solution is referred to, in this paper, as the optimal OF only planning.
\label{th1}
\end{theorem}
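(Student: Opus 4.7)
The plan is to recognize the reduced problem in Lemma~\ref{l2} as a classical minimum spanning tree problem with some mandatory edges, and then to argue that Algorithm~\ref{alg1} is simply a Kruskal-style greedy algorithm on the contracted graph, whose optimality is classical.

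First I would translate the algebraic constraints into purely graph-theoretic language. By the characterization $\lambda_2>0$ borrowed from \cite{25181258}, constraint \eref{eq:8} is equivalent to saying that the graph $G=(\mathcal{B},\mathcal{E})$ with $\mathcal{E}=\{(b_i,b_j):X_{ij}=1\}$ is connected. Constraint \eref{eq:np2} forces all pre-deployed OF edges $\mathcal{E}_P=\{(b_i,b_j):P_{ij}=1\}$ into $\mathcal{E}$, and \eref{eq:9} makes $\mathcal{E}$ undirected. Since $\pi^{(O)}_{ij}>0$ and the factor $\tfrac{1}{2}$ handles the symmetric double counting, the problem is: find the minimum weight connected edge set containing $\mathcal{E}_P$. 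Because all weights are positive, any optimal solution must be a spanning tree augmented by $\mathcal{E}_P$; equivalently, after contracting each connected component of the graph $(\mathcal{B},\mathcal{E}_P)$ into a single super-node, the problem becomes a standard minimum spanning tree (MST) on the contracted multigraph whose edge weights equal $\pi^{(O)}_{ij}$.

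Next I would show that Algorithm~\ref{alg1} is exactly Kruskal's algorithm on this contracted graph. The first for-loop builds the initial partition $\mathcal{Z}$ of $\mathcal{B}$ into the connected components of $(\mathcal{B},\mathcal{E}_P)$, which is precisely the set of super-nodes of the contraction; the selected $X_{ij}$ corresponding to pre-deployed edges are fixed to $1$, consistent with the mandatory-edge requirement. The while-loop then repeatedly picks the minimum-cost pair of clusters $(Z_i,Z_j)$, which by definition corresponds to choosing the globally cheapest inter-cluster edge, and merges them by setting the corresponding $X_{ij}=1$. This is the Kruskal rule in its ``pick the shortest edge that does not create a cycle'' form, disguised as a single-linkage clustering update. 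The loop terminates when $|\mathcal{Z}|=1$, i.e., when the graph is connected, producing a spanning tree of the contracted graph.

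Finally, I would invoke the classical optimality of Kruskal's algorithm for MST (via the cut property: at each step, the chosen edge is the lightest edge crossing the cut $(Z_i,\mathcal{B}\setminus Z_i)$, hence belongs to some MST; a standard exchange argument then shows all greedy choices extend to a globally optimal tree). Combined with the preceding reduction, this certifies that the output of Algorithm~\ref{alg1} achieves the minimum of \eref{eq:11}.

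The main obstacle I anticipate is the bookkeeping around the pre-deployed OF edges: one must verify that contracting the components of $(\mathcal{B},\mathcal{E}_P)$ loses no optimality, i.e., that any optimal solution of \eref{eq:11} is of the form ``$\mathcal{E}_P$ plus a spanning tree of the contracted graph''. This follows from the positivity of $\pi^{(O)}_{ij}$ (no benefit from extra edges inside a component) and from the mandatory inclusion \eref{eq:np2}, but it is the one place where a careful argument is needed before the classical MST optimality can be cited.
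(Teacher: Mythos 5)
Your proposal is correct, but it takes a genuinely different route from the paper's proof. You reduce \eref{eq:11} to a minimum spanning tree problem with mandatory edges: contract the connected components of the pre-deployed edge set $\mathcal{E}_P=\{(b_i,b_j):P_{ij}=1\}$ into super-nodes, observe that positivity of $\pi^{(O)}$ rules out any redundant edge (inside a component or parallel between super-nodes), identify the while-loop of \algref{alg1} as Kruskal's rule (the globally cheapest inter-cluster edge is exactly the next edge that does not close a cycle), and invoke the classical cut-property/exchange argument. The paper instead gives a bespoke two-stage argument: it first handles the case without pre-deployed links by introducing an auxiliary clustering procedure (its Algorithm 2), showing that any feasible solution reducible to a single cluster must contain all of \algref{alg1}'s edges, and showing via a case analysis on the number of residual clusters---supported by a small geometric lemma asserting that among any three points two must be mutually closest---that non-reducible solutions are suboptimal; it then treats the pre-deployed components as super-nodes, which is your contraction step in disguise. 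Your route buys rigor and economy: the cut property handles ties in the cost function cleanly, whereas the paper's three-point lemma implicitly assumes distinct distances (its chain $d(c,a)<d(c,b)<d(b,a)<d(a,c)$ breaks under ties), and the paper's claim that reducibility to a single cluster forces inclusion of \algref{alg1}'s edge set is asserted rather than argued. What the paper's version buys is self-containment (no appeal to external MST theory) and an explicit certificate procedure (Algorithm 2) that is reused conceptually elsewhere. The one step you rightly flag---that contracting $\mathcal{E}_P$ loses no optimality---is exactly the content of the paper's second subsection, and your justification via positivity of $\pi^{(O)}$ and constraint \eref{eq:np2} matches the paper's argument there.
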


\begin{proof}
To prove this theorem, we employ a two-stage proof. The first part of the proof shows that the solution reached by \algref{alg1} is the optimal solution to the problem proposed in \lref{l2} for a network without pre-deployed OF links. The second part of the proof extends the result for networks with pre-deployed OF connections. Therefore, we first show that \algref{alg1} produces a feasible solution to the problem. Afterwards, we show that any graph that can be reduced, using an algorithm similar to\algref{alg1}, to a single cluster includes the graph designed by \algref{alg1}. Finally, we show that any solution that cannot be reduced to a single group is not optimal. The complete proof can be found in \appref{ap3}.
\end{proof}

\subsection{Problem Approximation}

This subsection approximates the backhaul network planning problem \eref{Original_optimization_problem} under the assumption that a hybrid RF/FSO connection between two nodes that are not neighbours is more expensive than the OF links between each node and its closest neighbour. We first define $b_{i^*}$ as the closest node to base-station $b_i$ as follows:
\begin{align}
b_{i^*} = \arg \min_{\substack{b \in \mathcal{B} \\ b \neq b_i } } \pi^{(O)}(b_i,b).
\end{align}

The set of neighbours $\mathcal{N}_i$ of base-station $b_i$ is defined as the set of base-stations that are closest to base-station $b_i$, and that satisfy the connectivity condition. Mathematically, the condition can be written as:
\begin{align}
\mathcal{N}_i = \left\{b \in \mathcal{B} \setminus b_i \text{ such that } \pi^{(h)}(b_i,b) \leq \max_{b_j \in \mathcal{B}}\overline{X}_{ij}\pi^{(h)}_{ij} \right\},
\label{eq:nei}
\end{align}
where $\overline{X}_{ij},\ 1 \leq i \neq j \leq M$ is the optimal solution found in solving the OF only planning problem \eref{eq:11}.

\begin{remark}
The results presented in this paper do not depend on the definition of the set of neighbours $\overline{\mathcal{N}}_i$ of node $b_i$ as long as $\mathcal{N}_i \subset \overline{\mathcal{N}}_i$. Intuitively, as the set $\overline{\mathcal{N}}_i$ gets bigger and bigger, the approximation of the solution is more tight. For $\overline{\mathcal{N}}_i = \mathcal{B} \setminus \{b_i\}$, the proposed algorithm reduces to an exhaustive search.
\end{remark}

The assumption that two nodes that are far away from each others (i.e., not neighbours) connected with hybrid RF/FSO link generate a cost greater that the expense of the same nodes connected with OF links with their closest neighbours can be written $\forall\ (b_i,b_j) \notin \mathcal{N}_j \times \mathcal{N}_i$ as follows:
\begin{align}
\pi^{(O)}_{ii^*}+\pi^{(O)}_{jj^*} \leq \pi^{(h)}_{ij}.
\label{eq:12}
\end{align}

Let $\mathcal{R}_i = \{ b_j \in \mathcal{B} \setminus \{b_i\} \ | \ R^{(h)}_{ij} \geq \alpha\}$ be the set of nodes that satisfy, by their own, the reliability condition for node $b_i$. Based on the above assumption and definitions, the following lemma approximates the optimization problem \eref{Original_optimization_problem} under the assumption \eref{eq:12}.
\begin{lemma}
The problem of backhaul network cost minimization design using OF and hybrid RF/FSO connections can be approximated by the following problem:
\begin{subequations}
\label{Approximate_optimization_problem}
\begin{align}
\min& \ \cfrac{1}{2} \sum_{i=1}^M \sum_{j=1}^M X_{ij}\pi^{(O)}_{ij}+Y_{ij}\pi^{(h)}_{ij} \label{eq:25} \\
{\rm s.t.\ } & X_{ij} = X_{ji} \label{eq:20} \\
&Y_{ij} = Y_{ji} \label{eq:21}\\
&X_{ij}P_{ij} = P_{ij} \label{eq:np3} \\
&X_{ij}Y_{ij} = 0 \label{eq:22} \\
&\sum_{j=1}^M X_{ij}D_t + Y_{ij}D^{(h)}_{ij} \geq D_t \label{eq:23}\\
&\sum_{j=1}^M X_{ij}\tilde{\alpha} + \sum_{j \in \mathcal{R}_i}Y_{ij}\tilde{\alpha} + \sum_{j \in \overline{\mathcal{R}}_i}Y_{ij} R^{(h)}_{ij} \geq \tilde{\alpha} \label{eq:nr2} \\
&(X_{ij} + Y_{ij})\overline{X}_{ij} = \overline{X}_{ij} \label{eq:14} \\
&X_{ij},Y_{ij} \in \{0,1\},\ 1 \leq i,j\leq M,
\end{align}
\end{subequations}
where $\overline{\mathcal{R}}_i= \mathcal{B} \setminus \mathcal{R}_i$ is the complementary set of $\mathcal{R}_i$ and $\tilde{\alpha} = -\log(1-\alpha)$.
\label{l4}
\end{lemma}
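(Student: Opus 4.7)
The plan is to handle the two troublesome constraints of problem \eref{Original_optimization_problem} separately, namely the multiplicative reliability constraint \eref{eq:nr} and the algebraic connectivity constraint \eref{eq:6}, and to replace each by an additive linear inequality in the binary variables $X_{ij}$ and $Y_{ij}$. The remaining constraints of \eref{Original_optimization_problem} are already present in \eref{Approximate_optimization_problem} verbatim, so the whole argument reduces to justifying the two substitutions \eref{eq:nr}$\rightarrow$\eref{eq:nr2} and \eref{eq:6}$\rightarrow$\eref{eq:14}.

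For the reliability side, I would first rewrite \eref{eq:nr} as $\prod_{j=1}^M (1-X_{ij}\alpha)(1-Y_{ij}R^{(h)}_{ij}) \leq 1-\alpha$ and take logarithms. Because $X_{ij},Y_{ij}$ are binary, the product collapses to a sum and one obtains the exact equivalent
\begin{align*}
\sum_{j=1}^M X_{ij}\tilde\alpha + \sum_{j=1}^M Y_{ij}\bigl(-\log(1-R^{(h)}_{ij})\bigr) \geq \tilde\alpha,
\end{align*}
with $\tilde\alpha = -\log(1-\alpha)$. The hybrid coefficient $-\log(1-R^{(h)}_{ij})$ is then lower bounded in two different ways according to the sign of $R^{(h)}_{ij}-\alpha$: monotonicity of the logarithm gives $-\log(1-R^{(h)}_{ij})\geq\tilde\alpha$ for $j\in\mathcal{R}_i$, while the elementary inequality $-\log(1-x)\geq x$ on $[0,1)$ gives $-\log(1-R^{(h)}_{ij})\geq R^{(h)}_{ij}$ for $j\in\overline{\mathcal{R}}_i$. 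Substituting these two lower bounds in the corresponding parts of the sum yields exactly \eref{eq:nr2}, which is therefore a sufficient condition for \eref{eq:nr}.

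For the connectivity side, I would exploit the OF-only optimum $\overline{X}$ produced by \algref{alg1} and certified optimal by \thref{th1}. Under assumption \eref{eq:12}, no hybrid RF/FSO edge between two non-neighbour nodes can be cheaper than the pair of OF edges joining these nodes to their closest neighbours, and those two OF edges are precisely the ones that the OF-only skeleton already buys. Hence an exchange argument indicates that at an optimum one may assume every edge of $\overline{X}$ is realised either by an OF or by a hybrid RF/FSO connection; written in binary form this is exactly constraint \eref{eq:14}. Because the skeleton $\overline{X}$ is itself connected by construction of \algref{alg1} (which terminates only when $|\mathcal{Z}|=1$), any $(X,Y)$ satisfying \eref{eq:14} is automatically connected, so the algebraic constraint \eref{eq:6} may be dropped from the formulation.

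The step I expect to be the main obstacle is not the logarithmic linearisation, which is routine once the two bounds on $-\log(1-x)$ are in hand, but the rigorous justification that the hard requirement \eref{eq:14} is a legitimate surrogate for \eref{eq:6}. Concretely, one has to make the informal exchange argument precise: starting from an arbitrary feasible $(X,Y)$ for \eref{Original_optimization_problem}, show that every "long" hybrid RF/FSO edge $Y_{k\ell}=1$ joining non-neighbour nodes can be rerouted through the OF edges of $\overline{X}$ incident to $b_k$ and $b_\ell$, at no higher aggregate cost by \eref{eq:12} and without violating the data-rate or reliability constraints, thereby producing a solution that satisfies \eref{eq:14} with cost no larger than the original. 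Once this exchange lemma is stated and proved, combining it with the reliability linearisation above yields \lref{l4}.
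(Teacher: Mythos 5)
Your overall strategy coincides with the paper's: show that every feasible point of \eref{Approximate_optimization_problem} is feasible for \eref{Original_optimization_problem} by treating the two modified constraints separately, and observe that \eref{eq:14} forces the connected OF-only skeleton $\overline{X}$ of \algref{alg1} to be present, so \eref{eq:6} follows automatically. Where you genuinely diverge is the reliability step, and your version is the stronger one. The paper splits \eref{eq:nr} into three sufficient cases, applies a logarithm only to the third, and then invokes a first-order Taylor expansion under the informal assumption $R^{(h)}_{ij}\lll 1$ for $j\notin\mathcal{R}_i$, calling the result an ``equivalence''; it also momentarily redefines $\tilde\alpha$ as $1/\log(1-\alpha)$, which conflicts with the lemma statement. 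You instead use that for binary variables $-\log(1-X_{ij}\alpha)=X_{ij}\tilde\alpha$ and $-\log(1-Y_{ij}R^{(h)}_{ij})=Y_{ij}\bigl(-\log(1-R^{(h)}_{ij})\bigr)$ exactly, and then apply the two elementary bounds $-\log(1-x)\geq\tilde\alpha$ for $x\geq\alpha$ and $-\log(1-x)\geq x$ on $[0,1)$; this yields a clean, assumption-free proof that \eref{eq:nr2} implies \eref{eq:nr}, which is precisely the direction the lemma needs. One remark: the ``exchange lemma'' you flag as the main obstacle is not actually required for \lref{l4} itself, which only asserts that \eref{Approximate_optimization_problem} is a (conservative) restriction of \eref{Original_optimization_problem}; the question of how little is lost by imposing \eref{eq:14}, i.e., the tightness of the approximation under assumption \eref{eq:12}, is exactly what the paper defers to \thref{th2}, where the exchange argument you sketch is carried out.
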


\begin{proof}
To show that the original problem, it is sufficient to show that any solution to \eref{Approximate_optimization_problem} is a feasible solution to \eref{Original_optimization_problem}. Therefore, we show that constraint \eref{eq:nr2} is equivalent to constraint \eref{eq:nr} and that constraint \eref{eq:14} is included in constraint \eref{eq:6}. The complete proof can be found in \appref{ap4}.
\end{proof}

\section{Proposed Solution}\label{sec:prop}

This section proposes the solution for the approximate problem \eref{Approximate_optimization_problem}. The solution is based first on constructing the network planning graph, and then on formulating the problem \eref{Approximate_optimization_problem} as a graph theory problem that can be optimally solved with moderate complexity.

\subsection{Planning Graph}

In this section, we introduce the undirected \emph{planning} graph $\mathcal{G}(\mathcal{V},\mathcal{E})$, where $\mathcal{V}$ is the set of vertices and $\mathcal{E}$ the set of edges. Before stating the vertices construction and the edge connection, we first define the cluster $\mathcal{C}_i$ for each node $b_i, 1 \leq i \leq M$ as follows:
\begin{align}
\mathcal{C}_i = \{((X_{ij_1},Y_{ij_1}),\ \cdots,\ &(X_{ij_{|\mathcal{N}_i|}},Y_{ij_{|\mathcal{N}_i|}})), \text{ such that } \nonumber \\
\bigcup_{j \in \mathcal{N}_i}b_{j} &= \mathcal{N}_i \nonumber \\
X_{ij}P_{ij} &= P_{ij}, \forall \ j \in \mathcal{N}_i \nonumber \\
X_{ij}Y_{ij} &= 0, \forall \ j \in \mathcal{N}_i \nonumber \\
\sum_{j \in \mathcal{N}_i} X_{ij}D_t + Y_{ij}D^{(h)}_{ij} &\geq D_t \\
\sum_{j \in \mathcal{N}_i} X_{ij}\tilde{\alpha} + \sum_{j \in \mathcal{N}_i \cap \mathcal{R}_i}Y_{ij}\tilde{\alpha} &+ \sum_{j \in \mathcal{N}_i \cap \overline{\mathcal{R}}_i}Y_{ij} R^{(h)}_{ij} \geq \tilde{\alpha} \nonumber \\
(X_{ij}+Y_{ij})\overline{X}_{ij} &= \overline{X}_{ij}, \forall \ j \in \mathcal{N}_i \}.\nonumber
\end{align}

Define the weight of each element $\gamma_i \in \mathcal{C}_i$, ($\gamma_i = \{(X_{ij_1},Y_{ij_1}),\ \cdots,\ (X_{ij_{|\mathcal{N}_i|}},Y_{ij_{|\mathcal{N}_i|}})\}$), as follows:
\begin{align}
w(\gamma_i) = -\cfrac{1}{2}\sum_{j \in \mathcal{N}_i} X_{ij}\pi^{(O)}_{ij}+Y_{ij}\pi^{(h)}_{ij}.
\label{eq:15}
\end{align}

For each cluster $\gamma_i \in \mathcal{C}_i$, a vertex $v_{ij}, 1 \leq j \leq |\mathcal{C}_i|$ is generated. Two distinct vertices $v_{ij}$ and $v_{kl}$ are connected with an edge in $\mathcal{E}$ if the two following conditions are satisfied:
\begin{enumerate}
\item C1: $i \neq k$: The vertices represents different nodes in the network.
\item C2: $(X_{ik},Y_{ik}) = (X_{ki},Y_{ki})$ if $(b_i,b_k) \in (\mathcal{N}_k,\mathcal{N}_i)$: The vertices are non conflicting.
\end{enumerate}

\subsection{Proposed Algorithm}

The following theorem characterizes the solution of the approximated backhaul network planning problem \eref{Approximate_optimization_problem}.
\begin{theorem}
Let $(X^*_{ij},Y^*_{ij}), 1 \leq i,j \leq M$ be the optimal solution to the planning problem \eref{Approximate_optimization_problem} then we have $X^*_{i,j}+Y^*_{i,j}=1$ only if $(i,j) \in \mathcal{N}_j \times \mathcal{N}_i$.
\label{th2}
\end{theorem}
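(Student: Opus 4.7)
The plan is to argue by contradiction, using an exchange argument that leverages the approximation hypothesis \eref{eq:12}. I would assume that $(X^*,Y^*)$ is optimal for \eref{Approximate_optimization_problem} but that there exists a pair $(i,j)$ with $(i,j)\notin\mathcal{N}_j\times\mathcal{N}_i$ for which $X^*_{ij}+Y^*_{ij}=1$. The first preparatory observation is that such an edge cannot be forced by the OF-only baseline: if $\overline{X}_{ij}=1$, then $\overline{X}_{ij}\pi^{(h)}_{ij}=\pi^{(h)}_{ij}\le \max_k\overline{X}_{ik}\pi^{(h)}_{ik}$, which places $b_j$ in $\mathcal{N}_i$, and symmetrically $b_i\in\mathcal{N}_j$, contradicting the hypothesis. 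Hence $\overline{X}_{ij}=0$, so deactivating the edge neither violates \eref{eq:14} nor, by \thref{th1}, disconnects the graph, since the OF-only spanning subgraph already sits inside $(X^*,Y^*)$ and keeps \eref{eq:6} automatically satisfied.

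I would then define the modified assignment $(\tilde X,\tilde Y)$ by setting $\tilde X_{ij}=\tilde Y_{ij}=0$ and by activating the closest-neighbor edges as pure OF links, $\tilde X_{i,i^*}=1$, $\tilde Y_{i,i^*}=0$, $\tilde X_{j,j^*}=1$, $\tilde Y_{j,j^*}=0$ (together with the symmetric entries), keeping everything else equal to $(X^*,Y^*)$. The cost difference decomposes as
\[
\Delta \;=\; -\bigl(X^*_{ij}\pi^{(O)}_{ij}+Y^*_{ij}\pi^{(h)}_{ij}\bigr)+\Delta_{(i,i^*)}+\Delta_{(j,j^*)},
\]
where $\Delta_{(i,i^*)}$ is $0$ if $X^*_{i,i^*}=1$, $\pi^{(O)}_{i,i^*}-\pi^{(h)}_{i,i^*}$ if $Y^*_{i,i^*}=1$, and $\pi^{(O)}_{i,i^*}$ otherwise, so $\Delta_{(i,i^*)}\le \pi^{(O)}_{i,i^*}$ in all cases; likewise $\Delta_{(j,j^*)}\le \pi^{(O)}_{j,j^*}$. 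Combining $X^*_{ij}\pi^{(O)}_{ij}+Y^*_{ij}\pi^{(h)}_{ij}\ge \pi^{(h)}_{ij}$ (since $\pi^{(h)}\le \pi^{(O)}$) with the approximation assumption \eref{eq:12} applied to the non-mutual-neighbor pair $(b_i,b_j)$ yields $\Delta\le -\pi^{(h)}_{ij}+\pi^{(O)}_{i,i^*}+\pi^{(O)}_{j,j^*}\le 0$, contradicting the optimality of $(X^*,Y^*)$ (and, in the boundary case of equality, producing an equally cheap optimum with one fewer offending edge, which iterates to the claimed structure).

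The step I expect to be the main obstacle is the feasibility verification for $(\tilde X,\tilde Y)$. The symmetry, pre-deployment and mutual-exclusion constraints hold by construction, and \eref{eq:14} is preserved because $\overline{X}_{ij}=0$ at the deactivated edge while the upgraded $(i,i^*)$ and $(j,j^*)$ entries dominate any baseline requirement. The delicate verification concerns \eref{eq:23} and \eref{eq:nr2} at $b_i,b_j,b_{i^*},b_{j^*}$. At $b_i$ (and symmetrically $b_j$), the freshly activated OF edge $\tilde X_{i,i^*}=1$ alone contributes $D_t$ and $\tilde\alpha$, saturating both constraints. At $b_{i^*}$ (analogously $b_{j^*}$), the only perturbation is the $(i,i^*)$ term: its data-rate contribution changes from $X^*_{i,i^*}D_t+Y^*_{i,i^*}D^{(h)}_{i,i^*}$ to $D_t$, and since this replaced term is merely one summand of the original total (already $\ge D_t$), the modified total stays at least $D_t$; its reliability contribution moves from $Y^*_{i,i^*}R^{(h)}_{i,i^*}$, $Y^*_{i,i^*}\tilde\alpha$, or $0$ up to $\tilde\alpha$, which is weakly larger thanks to the elementary inequality $\tilde\alpha=-\log(1-\alpha)>\alpha>R^{(h)}_{i^*,i}$ valid whenever $i\in\overline{\mathcal{R}}_{i^*}$. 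Once feasibility is secured, the cost contradiction established above finishes the argument.
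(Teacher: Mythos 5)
Your proof is correct and follows essentially the same route as the paper's: an exchange argument that deletes the offending non-neighbor edge, replaces it with the two closest-neighbor OF links $(i,i^*)$ and $(j,j^*)$, and invokes assumption \eref{eq:12} to bound the cost change. The only differences are organizational --- the paper splits into two connectivity scenarios and rules out the disconnecting one via constraint \eref{eq:14}, whereas you preempt that case by establishing $\overline{X}_{ij}=0$ up front; your feasibility verification at $b_{i^*},b_{j^*}$ and your handling of the $\Delta=0$ boundary case are in fact more careful than the paper's.
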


\begin{proof}
To show this theorem, the scenarios that can result in a violation of the desired property are identified. Using the cost optimality, connectivity constraint, and the assumption about the relative value, all such scenarios are shown to be sub-optimal to \eref{Approximate_optimization_problem}. Therefore, the optimal solution satisfies the property. The complete proof can be found in \appref{ap5}.
\end{proof}

The following theorem links the solution of problem \eref{Approximate_optimization_problem} to the planning graph.
\begin{theorem}
The solution of the approximation of the backhaul network problem \eref{Approximate_optimization_problem} using hybrid RF/FSO can be formulated as a maximum weight clique, among the cliques of size $M$ in the planning graph, in which the weight of each vertex $v_{ij}$ is the weight of the corresponding cluster $\gamma_i$ defined in \eref{eq:15}.
\label{th3}
\end{theorem}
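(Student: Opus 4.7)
The plan is to build a weight-preserving bijection between the size-$M$ cliques of $\mathcal{G}(\mathcal{V},\mathcal{E})$ and the feasible solutions of the approximated problem \eref{Approximate_optimization_problem}; once this is in hand, maximising the total vertex weight over size-$M$ cliques is tantamount to minimising the objective \eref{eq:25}.

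For the forward direction, I would start from a feasible $(X_{ij},Y_{ij})$. Theorem~\ref{th2} forces every active entry into $\mathcal{N}_j\times\mathcal{N}_i$, so the tuple $\gamma_i=((X_{ij},Y_{ij}))_{j\in\mathcal{N}_i}$ carries all information incident on $b_i$. Each of the conditions in the definition of $\mathcal{C}_i$ is the restriction to $\mathcal{N}_i$ of one of the constraints \eref{eq:np3}--\eref{eq:14}, so $\gamma_i\in\mathcal{C}_i$ and the vertex $v_{i,\gamma_i}$ exists. Choosing one such vertex per node gives $M$ vertices from distinct clusters (so C1 holds), while the global symmetry relations \eref{eq:20}--\eref{eq:21} translate exactly to the compatibility condition C2, producing a clique of size $M$.

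Conversely, any size-$M$ clique $\mathcal{K}=\{v_{1,\gamma_1},\dots,v_{M,\gamma_M}\}$ contains exactly one vertex per cluster by C1, and I would recover $(X_{ij},Y_{ij})$ by reading $\gamma_i$ for $j\in\mathcal{N}_i$ and setting remaining entries to zero. Edge condition C2 forces $X_{ij}=X_{ji}$ and $Y_{ij}=Y_{ji}$ on the symmetric-neighbour pairs, and the constraints of \eref{Approximate_optimization_problem} then follow from the local constraints packed into each $\mathcal{C}_i$. A direct counting of each active edge yields the weight-to-cost identity
\begin{align*}
-\sum_{i=1}^{M}w(\gamma_i)
=\frac{1}{2}\sum_{i=1}^{M}\sum_{j\in\mathcal{N}_i}\!\bigl(X_{ij}\pi^{(O)}_{ij}+Y_{ij}\pi^{(h)}_{ij}\bigr)
=\frac{1}{2}\sum_{i=1}^{M}\sum_{j=1}^{M}\!\bigl(X_{ij}\pi^{(O)}_{ij}+Y_{ij}\pi^{(h)}_{ij}\bigr),
\end{align*}
where the second equality uses Theorem~\ref{th2}, and the factor $1/2$ in \eref{eq:15} is absorbed by the double counting that each active edge receives across the sums over $i$ and over $j\in\mathcal{N}_i$.

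The step I expect to be the main obstacle is the asymmetric-neighbour case in which $b_k\in\mathcal{N}_i$ but $b_i\notin\mathcal{N}_k$: condition C2 imposes no consistency on such a pair, so one must rule out cliques whose $\gamma_i$ activates such an entry. This should follow from the pricing assumption \eref{eq:12} combined with Theorem~\ref{th2}: zeroing that entry yields a $\gamma_i'\in\mathcal{C}_i$ adjacent to the same vertices of the clique but with strictly smaller local cost, so no maximum weight clique can activate it. Finally, C1 caps any clique at size $M$, while cliques of size strictly less than $M$ fail to assign a $\gamma_i$ to every base-station and therefore cannot encode a feasible plan, which is why the maximisation must be restricted to cliques of size exactly $M$.
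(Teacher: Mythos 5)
Your proof follows essentially the same route as the paper's: use \thref{th2} to confine active links to mutual-neighbour pairs, set up a weight-preserving one-to-one correspondence between size-$M$ cliques (one vertex per cluster by condition C1, symmetry by condition C2) and feasible solutions of \eref{Approximate_optimization_problem}, and absorb the factor $1/2$ through the double counting of each edge. The asymmetric-neighbour subtlety you flag ($b_k\in\mathcal{N}_i$ but $b_i\notin\mathcal{N}_k$) is a genuine gap that the paper's own proof silently skips over by asserting that C2 yields full symmetry; your optimality-based patch is in the right spirit but not airtight, since zeroing such an entry may push $\gamma_i$ out of $\mathcal{C}_i$ when that entry was needed for the rate or reliability condition.
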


\begin{proof}
To prove this theorem, we first show that there is a one to one mapping between the set of feasible solutions of the problem \eref{Approximate_optimization_problem} and the set of cliques of degree $M$ in the planning graph $\mathcal{G}(\mathcal{V},\mathcal{E})$. To conclude the proof, we show that the weight of the clique is equivalent to the merit function of the optimization problem \eref{Approximate_optimization_problem}. The complete proof can be found in \appref{ap6}.
\end{proof}

\subsection{Complexity Analysis}

This subsection characterizes the complexity of solving the original $0-1$ integer program proposed in \eref{Original_optimization_problem} and its relaxed version proposed in \eref{Approximate_optimization_problem}.

In order to characterize the complexity of \eref{Original_optimization_problem}, we first compute and reduce the number of variables. Initially, the number of $X_{ij}$ and $Y_{ij}$ is $M^2$ each. However, as $X_{ii}$ and $Y_{ii}$ can take arbitrary values, the number reduces to $M^2-M$ variables each. Furthermore, from constraint \eref{eq:1} and \eref{eq:2} which translate the symmetry of the problem, only half of the variables are independent. Hence, the number of free variables is $\frac{M^2-M}{2}$. Finally, the pre-deployed OF links, i.e., constraint \eref{eq:np}, limits the number of variables. In fact, the constraint $P_{ij}=1$ translates to $X_{ij}=1$ and $Y_{ij}=0$. Let $\overline{P}=\sum_{i=1}^M \sum_{j=1}^M P_{ij}$ be the number of pre-deployed links. It can clearly be seen that the number of variables of $X_{ij}$ and $Y_{ij}$ is $\frac{M^2-M-\overline{P}}{2}$. Let $\mathcal{P}=\{(i,j) \ | \ i<j, P_{ij}=0\}$ be the set of nodes that do not have a pre-deployed links. From constraint \eref{eq:3}, the variables $X_{ij}$ and $Y_{ij}$ are not independent. As only $3$ combinations are possible, they can be represented by $\frac{M^2-M-\overline{P}}{2}$ ternary variable $Z_{ij}, (i,j) \in \mathcal{P}$ defined as follows:
\begin{align}
Z_{ij} =
\begin{cases}
0 \hspace{0.5cm} &\text{if } X_{ij}=0 \text{ and } Y_{ij}=0 \\
1 \hspace{0.5cm} &\text{if } X_{ij}=1 \text{ and } Y_{ij}=0 \\
2 \hspace{0.5cm} &\text{if } X_{ij}=0 \text{ and } Y_{ij}=1.
\end{cases}
\end{align}
Therefore, the optimization problem \eref{Original_optimization_problem} can be written as follows:
\begin{align*}
\min & \ \sum_{(i,j) \in \mathcal{P}} \delta(Z_{ij}-1)\pi^{(O)}_{ij} + \delta(Z_{ij}-2)\pi^{(h)}_{ij} \nonumber \\
{\rm s.t.\ } & \sum_{j\in \mathcal{P}} \delta(Z_{ij}-1)D_t + \delta(Z_{ij}-2)D^{(h)}_{ij} \geq D_t \nonumber \\
&1 - \prod_{j\in \mathcal{P}} (1 - \delta(Z_{ij}-1)\alpha)(1 - \delta(Z_{ij}-2) R^{(h)}_{ij}) \geq \alpha \nonumber \\
\end{align*}
\begin{align}
&\lambda_2 > 0 \nonumber \\
&Z_{ij} \in \{0,1,2\},\ (i,j) \in \mathcal{P},
\label{eq:zz}
\end{align}
where $\delta(x)$ is the discrete Delta function equal to $1$ if and only if its argument is equal to $0$. The formulation in \eref{eq:zz} allows to derive the complexity of the optimal solution as proportional to $\eta^{\frac{M^2-M-\overline{P}}{2}}$, where $1 < \eta \leq 3$ is the complexity constant that depends on the algorithm used to solving the weighted Steiner tree problem and where the extreme case $\eta=3$ reduces to the exhaustive search.

In order to characterize the complexity of the relaxed optimization problem \eref{Approximate_optimization_problem}, we first identify the number of vertices in the planning graph. As the number of vertices depends on the relative position of the nodes, the data rates, and the reliability functions, this subsection characterizes the worst case complexity. Let $n$ be the maximum number of neighbours of the nodes. From constraint \eref{eq:3}, in each cluster, the number of vertices is bounded by $3^n$. Therefore, the number of vertices of the whole planning graph is bounded by $3^nM$. The formulation of the problem as a graph theory problem allows to derive the complexity of the approximate solution as proportional to $\xi^{3^nM}$, where $1 < \xi \leq 2$ is the complexity constant that depends on the algorithm used to solving the maximum weight clique problem and where the extreme case $\xi=2$ reduces to the exhaustive search.

For the minimal number of neighbours as defined in \eref{eq:nei}, the number of neighbours does not depend on the number of nodes $M$ in the network. Therefore, for a large enough number of base-station, the realistic assumption about the cost of the OF and hybrid RF/FSO links the backhaul network design problem become more mathematically tractable with a reduction in complexity from an order of $2^{M^2}$ to a complexity of order $2^{M}$.

\section{Simulation Results}\label{sec:sim}

This section shows the performance of the proposed solution to the backhaul network planning problem using hybrid RF/FSO technology. The base-stations are randomly placed on a $5$ Km long square. The pre-deployed OF links are randomly placed between the base-stations. The ratio of pre-deployed links by the total number of possible connections is of $1/5$. These simulation assume that the price, the provided data rate, and the reliability are sole function of the distance separating the two end nodes. The cost of a multi-mode OM$3$ $(50/125)$ OF link is, according to various constructors (Asahi Kasei, Chromis, Eska, OFS HCS) between $3$ \$ and $30$ \$ per meter depending on the number of cores. In these simulations, the cost of the optical transceivers, being negligible, is ignored, and a medium price $\pi^{(O)}=13.5$ \$ per meter is adopted. The cost of a hybrid RF/FSO link is taken to be independent of the distance. Given the prices offered by the different constructors (fSONA, LightPointe, and RedLine), two types of costs are considered: $\pi^{(h)}=10$ k\$ and $20$ k\$. The price $\pi^{(h)}=40$ k\$ is proposed as a cut-off price for which hybrid RF/FSO do not represent any advantage.

The data rate of a hybrid RF/FSO links is taken to be $D_t$ over a distance $d_D$ after which it decays exponentially. In other words, $D^{(h)}(x)= D_t$ if $x<d_D$ and $D^{(h)}(x) = D_t\exp^{-(x-d_D)}$ otherwise. The reliability follows a similar model. The maximal distance satisfying the reliability condition is $d_R$. For illustration purposes, the length $d_D$ and $d_R$ are assumed to be $3$ and $2$ Km unless indicated otherwise.

The numbers of base-stations, the price of the hybrid RF/FSO transceivers, and the distances $d_D$ and $d_R$ vary in the simulations so as to study the methods performance for various scenarios. The planning simulated in this section are the optimal planning (solution of \eref{Original_optimization_problem}), the OF only planning (solution of \eref{eq:11}), and our proposed heuristic hybrid RF/FSO-OF planning (solution of \eref{Approximate_optimization_problem}).

\begin{figure}[t]
\centering
\includegraphics[width=0.8\linewidth]{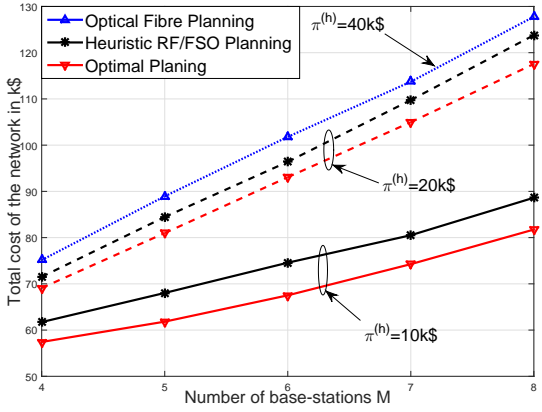}\\
\caption{Mean cost of the network versus the number of base-stations $M$. The solid lines are obtained for a price of a hybrid RF/FSO links of $\pi^{(h)}=10$ k\$, the dashed for a cost $\pi^{(h)}=20$ k\$ and the dotted for the cutoff price $\pi^{(h)}=40$ k\$ at which the different planning coincide.}\label{fig:MC}
\end{figure}

\begin{figure}[t]
\centering
\includegraphics[width=0.8\linewidth]{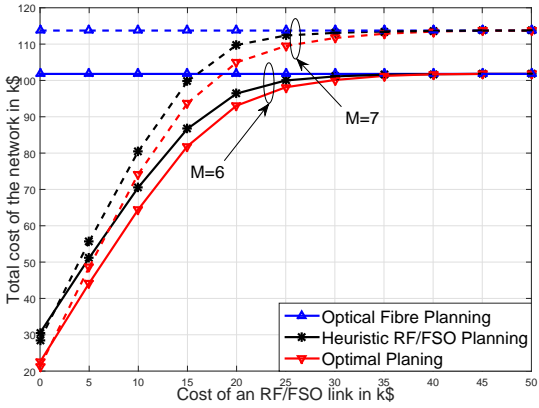}\\
\caption{Mean cost of the network versus the cost of hybrid RF/FSO links $\pi^{(h)}$. The solid lines are obtained for some base-station $M=6$, and the dashed for $M=7$.}\label{fig:CC}
\end{figure}

\begin{figure}[t]
\centering
\includegraphics[width=0.8\linewidth]{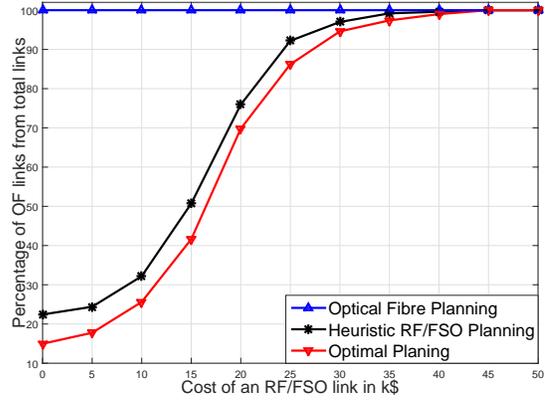}\\
\caption{Average percentage of OF connections versus the cost of hybrid RF/FSO links $\pi^{(h)}$ for a network containing $7$ nodes.}\label{fig:CR}
\end{figure}

\begin{figure}[t]
\centering
\includegraphics[width=0.8\linewidth]{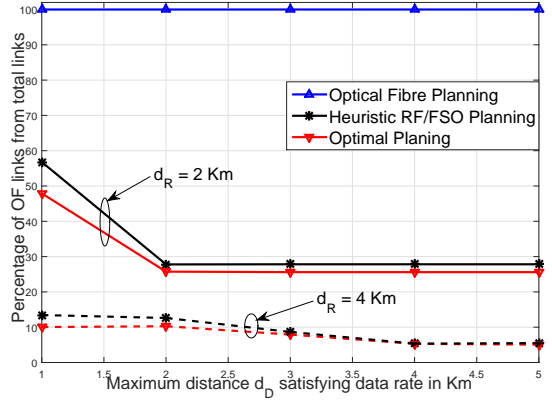}\\
\caption{Average percentage of OF connections versus the distance $d_D$ satisfying the data rate. The solid lines are obtained for a perfect reliability distance $d_R=2Km$ and the dashed one for $d_R=4Km$.}\label{fig:DDR}
\end{figure}

\begin{figure}[t]
\centering
\includegraphics[width=0.8\linewidth]{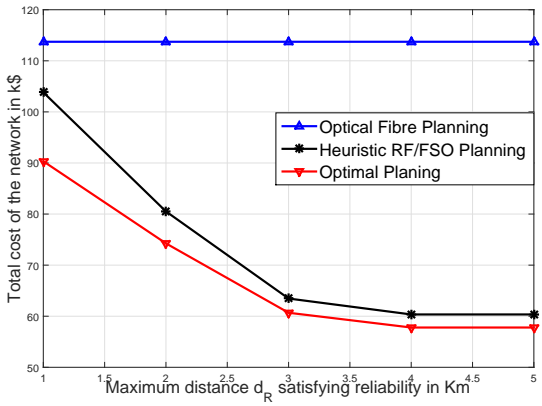}\\
\caption{Mean cost of the network versus the maximum distance $d_R$ satisfying the target reliability $\alpha$ for a system containing $7$ nodes.}\label{fig:DRC}
\end{figure}

\begin{figure}[t]
\centering
\includegraphics[width=0.8\linewidth]{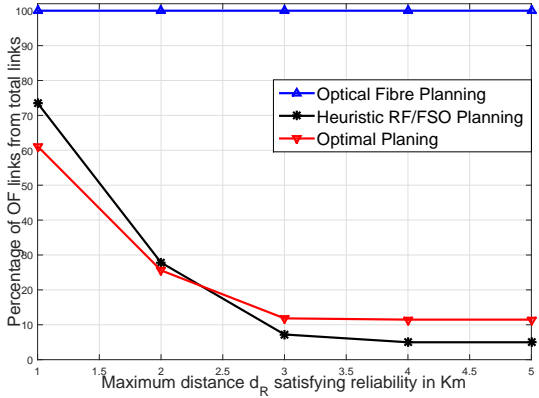}\\
\caption{Average percentage of OF connections versus the maximum distance $d_R$ satisfying the target reliability $\alpha$ for a network containing $7$ nodes.}\label{fig:DRR}
\end{figure}

\begin{figure}[t]
\centering
\includegraphics[width=0.8\linewidth]{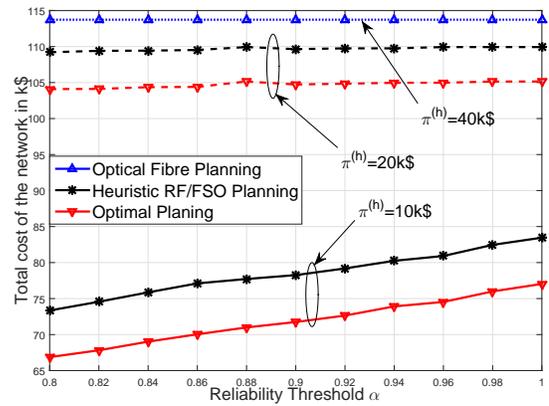}\\
\caption{Mean cost of the network versus the targeted reliability $\alpha$. The solid lines are obtained for a price of a hybrid RF/FSO links of $\pi^{(h)}=10$ k\$, the dashed for a cost $\pi^{(h)}=20$ k\$ and the dotted for the cutoff price $\pi^{(h)}=40$ k\$ at which the different planning coincide.}\label{fig:AC}
\end{figure}

\begin{figure}[t]
\centering
\includegraphics[width=0.8\linewidth]{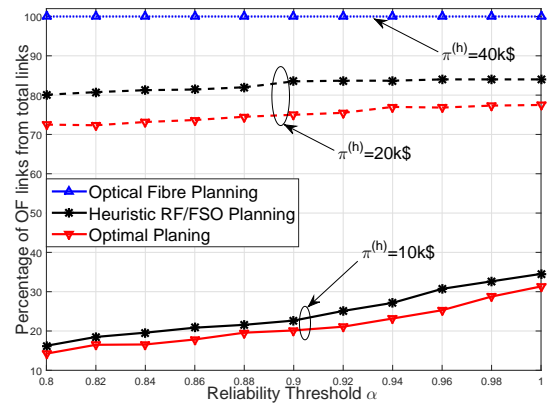}\\
\caption{Average percentage of OF connections versus the targeted reliability $\alpha$. The solid lines are obtained for a price of a hybrid RF/FSO links of $\pi^{(h)}=10$ k\$, the dashed for a cost $\pi^{(h)}=20$ k\$ and the dotted for the cutoff price $\pi^{(h)}=40$ k\$ at which the different planning coincide.}\label{fig:AR}
\end{figure}

\fref{fig:MC} plots the cost of the network versus the number of BSs, for various costs of the hybrid RF/FSO transceivers. We clearly see that the degradation of our proposed solution against the optimal solution becomes less severe when first the number of base-stations increases, and secondly when the hybrid RF/FSO transceivers become more expensive. The increase in performance in the first case can be explained by the fact that the connectivity opportunities of nodes increase as the number of base-stations increases, due to the rise in the neighbours sets $\mathcal{N}_i$. The gain in performance when the price of the hybrid RF/FSO transceivers increases can be explained by the fact that our assumption \eref{eq:12} becomes more valid as the price of the hybrid RF/FSO transceivers increases.

\fref{fig:CC} and \fref{fig:CR} illustrate the cost of the network and the ratio of the OF link, respectively, against the expense of the hybrid RF/FSO transceivers. As shown in \fref{fig:MC}, the performance of our proposed algorithm is more close to the one of the optimal planning as the cost of the hybrid RF/FSO transceivers increases. From \fref{fig:CR}, we clearly see that if the hybrid RF/FSO transceivers are expensive enough, both the optimal and our proposed solution contain only OF links. In fact, for expensive hybrid RF/FSO transceivers, the OF links offers a noticeable rate advantage that explain their use. It is worth mentioning that for a cost $\pi^{(h)} \geq 30$ k\$ in \fref{fig:CR}, even though the link's nature utilized in the optimal solution and our proposed solution are different, the total cost of the network is almost the same (\fref{fig:CC} for $M=7$).

To quantify the performance of the proposed algorithms with respect to the distance $d_D$, \fref{fig:DDR} plots the percentage of the OF link used against the distance $d_D$ for different reliability $d_R$ and a price $\pi_{(h)}=20$ k\$. \fref{fig:DDR} depicts that for a small $d_D$, our proposed solution uses more OF links than the optimal solution. Whereas for a $d_D \geq 2$ the ratio is almost the same. This can be explained by our choice of neighbours $\mathcal{N}_i$. The connectivity opportunities of our proposed solution are less than the one of the optimal solution. Hence, for small $d_D$, to satisfy the rate constraint our proposed solution connects to the neighbours using OF links since the hybrid RF/FSO links do not satisfy the constraint. The optimal solution connects to more nodes (outside the neighbours sets) to meet the rate constraint. We further note that for $d_R=2$ Km, the improvement in the provided data rate of the hybrid RF/FSO link does not decrease the total cost of the network as the used connections are the same. This can be explained by the fact that the solution is limited by the reliability constraint. Therefore, there is no gain in improving the provided data rate of the hybrid RF/FSO link unless the reliability of the connection is improved simultaneously.

\fref{fig:DRC} and \fref{fig:DRR} show the total cost of the network and the percentage of used OF links, respectively, against the maximum distance satisfying the reliability constraint $d_R$ for a system composed of $M=7$ base-stations and for different prices of hybrid RF/FSO of $\pi^{(h)}=20$ k\$. From \fref{fig:DRC}, as the reliability of the hybrid RF/FSO link increases, the proposed solution provides a cost similar to the optimal one. This can be explained by the fact that the reliability condition can be satisfied by connecting to base-stations inside the set of neighbours using exclusively hybrid RF/FSO links. The analysis is more corroborated by \fref{fig:DRR} that shows that the proposed solution uses more and more hybrid RF/FSO links as the reliability of such links increases. Hence, even if the optimal and heuristic hybrid FR/FSO plans are different for high reliability, the total deployment cost of the network is very similar.

Finally, to quantify the performance of the proposed solution against the target reliability,\fref{fig:AC} and \fref{fig:AR} plot the total cost of the network and the percentage of used OF links, respectively, against the reliability threshold $\alpha$ for a system composed of $M=7$ base-stations and for different prices of hybrid RF/FSO links. Again, from \fref{fig:AR} we clearly see that for expensive hybrid RF/FSO transceivers both the optimal solution and our proposed solution use exclusively OF links. \fref{fig:AC} shows that, even for cheap hybrid RF/FSO transceivers, our proposed solution performs as good as the optimal solution for a reliability $\alpha \geq 0.8$ even if the nature of the used links is not the same for that price as displayed in \fref{fig:AR}.

\section{Conclusion}\label{sec:conc}

In this paper, we consider the problem of backhaul network design using the OF and hybrid RF/FSO technologies. We first formulate the planning problem under connectivity, rate constraints, and reliability. We, then, solve the problem optimally when only OF links are allowed. Using the solution of the OF deployment, we formulate an approximation of the general planning problem and show that under a realistic assumption about the relative cost of the OF links and the hybrid RF/FSO transceivers, the solution can be expressed as a maximum weight clique in the planning graph. Simulation results show that our approach shows a close-to-optimal performance, especially for practical prices of the hybrid RF/FSO. As a future research direction, network design can be investigated while taking into account the varying reliability the hybrid RF/FSO links.

\appendices

\numberwithin{equation}{section}

\section{Proof of \lref{l1}}\label{ap1}

To show this lemma, this section express the objective function and the system constraints C1 to C5. The objective function can be written as:
\begin{align}
\cfrac{1}{2} \sum_{i=1}^M \sum_{j=1}^M X_{ij}\pi^{(O)}_{ij} + Y_{ij}\pi^{(h)}_{ij}.
\label{eqobj}
\end{align}

\begin{remark}
Naturally, the objective function should not include the price of the pre-deployed OF links. In other words, it should be written as:
\begin{align}
\cfrac{1}{2} \sum_{i=1}^M \sum_{j=1}^M (X_{ij}-P_{ij})\pi^{(O)}_{ij} + Y_{ij}\pi^{(h)}_{ij}.
\end{align}
However, as the term $- 1/2\sum_{i=1}^M \sum_{j=1}^M P_{ij}\pi^{(O)}_{ij}$ is constant with respect to the optimization variables $X_{ij}$ and $Y_{ij}$, then it is removed throughout this paper (including the simulations).
\end{remark}

The pre-deployed OF connections condition, i.e., constraint C1, states that the planning solution should include the pre-deployed OF links. In other words, if nodes $b_i$ and $b_j$ have a pre-deployed OF link $P_{ij}=1$, then the solution should have the same connection $X_{ij}=1$. However, the non-existence of a pre-deployed OF link does not add extra constraints to the system. Therefore, constraint C1 can be written, for arbitrary nodes $b_i$ and $b_j$, as follows:
\begin{align}
X_{ij}P_{ij}=P_{ij}.
\label{eqc1}
\end{align}
The system constraint C2 implies that, at maximum, only one type of connection may exist between any arbitrary nodes $b_i$ and $b_j$. Hence, constraint C2 can be mathematically written as follows:
\begin{align}
X_{ij}Y_{ij} = 0.
\label{eqc2}
\end{align}
For an arbitrary BS $b_i$, the data rate constraint C3 is satisfied if the sum of the data rate provided by all adjacent nodes exceeds the targeted data rate. Therefore, the constraint can be formulated for all BS $b_i$ as follows:
\begin{align}
&\sum_{j=1}^M X_{ij}D^{(O)}_{ij}+ Y_{ij}D^{(h)}_{ij} \geq D_t.
\end{align}
As OF links always satisfy the targeted data rate, i.e., $D^{(O)}_{ij} \geq D_t, \ \forall \ i \neq j$, then the data rate constraint C3 for node $b_i$ can be reformulated as follows:
\begin{align}
&\sum_{j=1}^M X_{ij}D_t + Y_{ij}D^{(h)}_{ij} \geq D_t.
\label{eqc3}
\end{align}

The reliability constraint C4 implies that each node should be connected to the network, at all time, with probability $\alpha$. As the reliability of each link is independent of the other links, such constraint can be formulated for node $b_i$ using the complementary event as follows:
\begin{align}
1 - \prod_{j=1}^M (1 - X_{ij} R^{(O)}_{ij})(1 - Y_{ij} R^{(h)}_{ij}) \geq \alpha.
\end{align}
As OF links are always reliable, i.e., $R^{(O)}(d_{ij}) \geq \alpha \ \forall i \neq j$, the reliability constraint can be simplified as follows:
\begin{align}
1 - \prod_{j=1}^M (1 - X_{ij}\alpha)(1 - Y_{ij} R^{(h)}_{ij}) \geq \alpha.
\label{eqc4}
\end{align}
Define $\mathbf{C} = [c_{ij}]$ as the adjacency matrix by $c_{ij} = X_{ij}+Y_{ij}$. Since only one type of connections exists between the same BSs then $c_{ij}$ is a binary variable (i.e., $c_{ij} \in \{0,1\}$). The connectivity constraint C5 implies that the graph representing the base-stations is connected. From a graph theory perspective \cite{25181258}, such graph connectivity constraint can be expressed as a function of the Laplacian matrix $L$ defined as $\mathbf{L} = \mathbf{D}-\mathbf{C}$, where $\mathbf{D}=\text{diag}(d_1,\ \cdots,\ d_M)$ is a diagonal matrix with $d_i = \sum_{j=1}^M c_{ij}$. The diagonalization of the Laplacian matrix is given by $\mathbf{L}= \mathbf{Q}\mathbf{\Lambda} \mathbf{Q}^{-1}$, where $\mathbf{\Lambda}= \text{diag}(\lambda_1,\ \cdots,\ \lambda_M)$ with $\lambda_1 \leq \lambda_2 \leq \cdots \leq \lambda_M$. The connectivity condition C5 of the matrix can be written using the algebraic formulation proposed in \cite{25181258} as:
\begin{align}
\lambda_2 > 0.
\label{eqc5}
\end{align}

Combining the objective function \eref{eqobj} and the constraints \eref{eqc1}, \eref{eqc2}, \eref{eqc3}, \eref{eqc4}, and \eref{eqc5}, gives the optimization problem proposed in \eref{Original_optimization_problem}.

\section{Proof of \thref{th1}}\label{ap3}

To show the theorem, this section proposes first to demonstrate that \algref{alg1} outputs the optimal solution to the optimization problem stated in \lref{l2} for a network without pre-deployed links. The second part of the section extends the result to network with pre-deployed OF connections.

\subsection{Network Without Pre-deployed OF Links}

To proof this theorem for a network without pre-deployed links, we first prove that \algref{alg1} produces a feasible solution to the problem. Afterward, we show that any graph that can be reduced, using \algref{alg2}, to a single cluster includes the graph outputted by \algref{alg1}. Finally, we show that any solution that cannot be reduced to a single cluster is not optimal.

\algref{alg2} can be seen as a complement of \algref{alg1}. As for \algref{alg1}, in \algref{alg2} begins by generating a cluster of each BS in the system. Afterward, two groups at the minimum price of each other and whose BSs at the minimum price are connected are merged into a single cluster. The process is repeated until no further connection can be found. The steps of the algorithm are summarized in \algref{alg2}.

Let $\overline{\mathcal{Z}}= \{\overline{Z}_1,\ \cdots,\ \overline{Z}_{|\overline{\mathcal{Z}}|}\}$ be the clustering at any step of \algref{alg1}. First note that $\overline{\mathcal{Z}}$ is a partition of $\mathcal{B}$. We proof by induction that nodes inside any cluster $\overline{Z}_i,\ 1 \leq i \leq |\overline{\mathcal{Z}}|$ are connected. Clearly, for a cluster $\overline{Z}_i$ with $|\overline{Z}_i|=1$ (the cluster contains a single node), all nodes inside the cluster are connected. Assume that all clusters $\overline{Z}_i$ of size $|\overline{Z}_i| \leq n$ are connected. From the last step of \algref{alg1} clusters of size $n+1$ can be generated only by merging two clusters $\overline{Z}_j$ and $\overline{Z}_k$ with $|\overline{Z}_j|,|\overline{Z}_k| < n$ and $|\overline{Z}_j|+|\overline{Z}_k|=n+1$. Since by construction such clusters are connected ($\overline{X}_{jk}=\overline{X}_{kj}=1$ with $b_j \in \overline{Z}_j$ and $b_k \in \overline{Z}_k$), then the resulting cluster $\overline{Z}_i$ from merging $\overline{Z}_j$ and $\overline{Z}_k$ is also connected. Therefore, all nodes within any arbitrary cluster $\overline{Z}_i,\ 1 \leq i \leq |\overline{\mathcal{Z}}|$ are connected. Finally, since $\overline{\mathcal{Z}}$ contains a single cluster at the end of \algref{alg1} (i.e., $|\overline{\mathcal{Z}}|=1$) and it is a partition of $\mathcal{B}$ (contains all the nodes in the network), then all the nodes are connected. Hence the outputted solution satisfy constraint \eref{eq:8}. By construction, we can easily see that the connections are binary and symmetric. In other words, the outputted solution satisfy constraints \eref{eq:9} and \eref{eq:10} which conclude that it is a feasible solution.

\begin{algorithm}[t]
\begin{algorithmic}
\REQUIRE $\mathcal{B}, \pi^{(O)},$ and $X_{ij},\ 1 \leq i,j\leq M$.
\STATE Initialize $\mathcal{Z} = \varnothing$.
\FORALL {$b \in \mathcal{B}$}
\STATE $\mathcal{Z} = \{\mathcal{Z},\{b\}\}$.
\ENDFOR
\STATE Initialize $t=$TRUE.
\WHILE {$t=$TRUE}
\STATE $t=$FALSE.
\FORALL {$Z \neq Z^{\prime} \in \mathcal{Z}$}
\IF {$\sum\limits_{\substack{b_i \in Z \\ b_j \in Z^{\prime}}}X_{ij} = 1$}
\STATE $Z^* = \arg \min\limits_{\substack{X \in \mathcal{Z} \\ X \neq Z^{\prime}}} \left[ \min\limits_{\substack{b \in X \\ b^{\prime} \in Z^{\prime}}} \pi^{(O)}(b , b^{\prime})\right]$
\STATE $Z^{\prime *} = \arg \min\limits_{\substack{X \in \mathcal{Z} \\ X \neq Z}} \left[ \min\limits_{\substack{b \in X \\ b^{\prime} \in Z}} \pi^{(O)}(b^{\prime} ,b)\right]$.
\IF {$Z = Z^*$ and $Z^{\prime}=Z^{\prime*}$}
\STATE $(b_i,b_j) = \arg \min\limits_{\substack{b \in Z \\ b^{\prime} \in Z^{\prime}}} \pi^{(O)}(b , b^{\prime})$.
\IF {$X_{ij}=1$}
\STATE $\mathcal{Z} = \mathcal{Z} \setminus \{Z\}$
\STATE $\mathcal{Z} = \mathcal{Z} \setminus \{Z^{\prime}\}$
\STATE $\mathcal{Z} = \{\mathcal{Z},\{Z_i,Z_j\}\}$.
\STATE $t=$TRUE
\ENDIF
\ENDIF
\ENDIF
\ENDFOR
\ENDWHILE
\end{algorithmic}
\caption{Clustering Algorithm}
\label{alg2}
\end{algorithm}

Let $X_{ij}, 1 \leq i,j \leq M$ be a feasible solution and let $\mathcal{Z}$ be the outputted clustering by \algref{alg2} when inputted $X_{ij}$. We can clearly see that $\mathcal{Z}$ is a partition of $\mathcal{B}$. Therefore, if $|\mathcal{Z}|=1$, then $\mathcal{Z}=\overline{\mathcal{Z}}$ since there exist only a unique partition containing a single element (the set $\mathcal{B}$ itself). This concludes that any graph that can be reduced, using \algref{alg2}, to a single cluster includes all the connections that are created in the graph outputted by \algref{alg1}. Since $\pi^{(O)}$ is a strictly positive function and that the graph outputted by \algref{alg1} have the minimum number of connections among all the graphs that can be reduced to a single cluster using \algref{alg2}, then the solution of \algref{alg1} is the best solution among the solutions that can be reduced to a single cluster using \algref{alg2}.

Now assume that $|\mathcal{Z}| \neq 1$. We can clearly see that $|\mathcal{Z}| \geq 3$. Otherwise, if there exist only two clusters (i.e., $|\mathcal{Z}| =2$) and since the solution is feasible, then they are connected. Due to the fact they are only two, then two cases can be distinguished:
\begin{itemize}
\item The BSs at the minimum price of each others are connected and hence they can be reduced to a single cluster. Therefore, clustering $|\mathcal{Z}| =2$ cannot be outputted by \algref{alg2}.
\item The BSs at the minimum price of each others are not connected. Then the solution having the same connections except for the link between the two cluster being replaced with the connection of BSs at the minimum price of each others produces a feasible solution at a lower cost. Therefore, the initial solution is not optimal.
\end{itemize}

Let $\mathcal{Z}= \{Z_1,\ \cdots,\ Z_{|\mathcal{Z}|}\}, |\mathcal{Z}| \geq 3$ be the outputted clustering by \algref{alg2}. Define the reduced clustering as $\tilde{\mathcal{Z}}= \{Z_1,\ Z_2,\ \{Z_3 \cup \cdots \cup \ Z_{|\mathcal{Z}|}\}\ \} = \{\tilde{Z_1},\ \tilde{Z_2},\ \tilde{Z_3}\}$. We can clearly see that $|\tilde{\mathcal{Z}}| = 3$ with none of the clusters connected and at minimum price of each other.

\begin{lemma}
For any three points in the plane, there must exist two points at the minimum distance from each others.
\label{l3}
\end{lemma}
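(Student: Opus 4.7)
The statement is essentially that among any three points in the plane, two of them are \emph{mutually} each other's nearest neighbor, i.e.\ there exist points $P,Q$ in the triple such that the point closest to $P$ (among the other two) is $Q$ and the point closest to $Q$ (among the other two) is $P$. My plan is to prove this by simply selecting the pair achieving the minimum pairwise distance and verifying it has the required mutual-nearest-neighbor property.

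Concretely, let the three points be $A,B,C$ and denote $c=d(A,B)$, $b=d(A,C)$, $a=d(B,C)$. Without loss of generality, by relabeling, assume that $c=\min(a,b,c)$, so that
\begin{equation}
d(A,B)\ \leq\ d(A,C)\quad\text{and}\quad d(A,B)\ \leq\ d(B,C).
\end{equation}
Then the first inequality says that among the two other points, $A$ is closer to $B$ than to $C$, so $B$ is the nearest neighbor of $A$. The second inequality symmetrically says that $A$ is the nearest neighbor of $B$. Hence $A$ and $B$ are two points that are at minimum distance from each other, which is exactly what the lemma claims.

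The argument is essentially a one-line observation — the pair realizing the global minimum of the three edge lengths of the triangle $ABC$ is automatically a pair of mutual nearest neighbors — so no serious obstacle is anticipated. The only subtlety worth flagging is the tie-breaking case in which two or all three distances are equal; in that situation any pair attaining the minimum works, so the statement still holds (interpreting ``the minimum distance'' in the non-strict sense). This lemma will then be used in the surrounding argument to force a contradiction with the assumption $|\tilde{\mathcal{Z}}|=3$, since among the three clusters $\tilde Z_1,\tilde Z_2,\tilde Z_3$ some pair must be at mutual minimum cost, contradicting the claim that none of the clusters in $\tilde{\mathcal{Z}}$ are at minimum price of each other.
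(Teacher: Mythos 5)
Your proof is correct. It takes the direct route: pick the pair realizing $\min(a,b,c)$ and observe it is automatically a mutual-nearest-neighbor pair. The paper instead argues by contradiction: it assumes no such mutual pair exists, claims the only possible configuration (up to relabeling) is a cyclic one in which $a$ prefers $b$, $b$ prefers $c$, and $c$ prefers $a$, and then chains the resulting strict inequalities $d(a,b)<d(a,c)$, $d(b,c)<d(b,a)$, $d(c,a)<d(c,b)$ into the absurdity $d(a,c)<d(a,c)$. The two arguments rest on the same underlying fact --- the globally shortest of the three edges is a mutual minimum --- but your direct version is shorter, avoids the (slightly informal) case enumeration ``the only possible configuration is \dots'', and explicitly handles ties among the distances, which the paper's strict-inequality chain glosses over. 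In exchange, the paper's contrapositive phrasing mirrors more closely how the lemma is invoked in Appendix~B (to rule out a clustering in which no two of the three clusters are at mutual minimum price), but your closing remark shows you have correctly identified that role as well.
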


\begin{proof}
Let $a,b,c$ be the three points in the plane and assume that there do not exist two points at minimum distance of each other. The only possible configuration (up to a permutation of the points) is that $a$ at minimum distance from $b$ and $b$ is not. Hence $b$ is at minimum distance from $c$ which is at its turn at minimum distance from $a$. These conditions yield $d(a,b)<d(a,c)$, $d(b,c)<d(b,a)$ and $d(c,a)<d(c,b)$. Since the distance operator is symmetric then, $d(c,a)<d(c,b)=d(b,c)<d(b,a)=d(a,b)<d(a,c)$. In other words, $d(a,c)<d(a,c)$, which is impossible. Therefore for any three points in the plane, there must exist two points at minimum distance of each others.
\end{proof}

From \lref{l3}, there must exist two clusters at the minimum price of each other. Since they have not been reduced to a single cluster, then they are not connected. For simplicity, assume $\tilde{Z_1}$ and $\tilde{Z_2}$ are at minimum price of each other and since the graph is connected then $\tilde{Z_1}$ and $\tilde{Z_3}$ are connected and similarly for $\tilde{Z_2}$ and $\tilde{Z_3}$. Moreover, it can be easily concluded that all nodes inside the clusters are connected. Otherwise, assume $\tilde{Z_3}$ can be split into two non-connected clusters $\tilde{Z}$ and $\tilde{Z}^{\prime}$ with $\tilde{Z_1}$ connected only to $\tilde{Z}$ and $\tilde{Z_2}$ connected only to $\tilde{Z}^{\prime}$. Then, since $\tilde{Z_1}$ and $\tilde{Z_2}$ are not connected, the whole graph is not connected and the solution is not feasible.

The clustering connecting $\tilde{Z_1}$ with $\tilde{Z_2}$ and $\tilde{Z_1}$ with $\tilde{Z_3}$ (or $\tilde{Z_2}$ with $\tilde{Z_3}$) produces also a feasible solution at a lower cost since the sum of the prices is minimized ($\pi^{(O)}(\tilde{Z_1},\tilde{Z_2}) < \min(\pi^{(O)}(\tilde{Z_1},\tilde{Z_3}),\pi^{(O)}(\tilde{Z_2},\tilde{Z_3}))$). Therefore, $\tilde{\mathcal{Z}}$ is not optimal and by extension $\mathcal{Z}$ is also not optimal. Finally, we can conclude that the optimal solution is the solution containing a single cluster. Therefore, the solution outputted by \algref{alg1} is the optimal solution to the problem proposed in \lref{l2}.

\subsection{Network With Pre-deployed OF Links}

It can explicitly be noted that for a network with pre-deployed OF links, \algref{alg1} produces a feasible solution. In fact, as for the previous subsection, the solution outputted by the algorithm satisfy the constraints \eref{eq:8}, \eref{eq:9}, and \eref{eq:10}. Furthermore, from the initialization of the variables $X_{ij}, 1 \leq i,j \leq M$, the solution satisfies constraint \eref{eq:np2}.

From the initialization of the clusters, all the base-stations that are connected with pre-deployed OF links are merged in the same cluster. Such initialization implies the following two properties:
\begin{enumerate}
\item Only the pre-deployed OF links connect such BSs.
\item Any node connected to the cluster is connected to the BS with minimal cost inside that cluster.
\end{enumerate}
Therefore, to show that the solution of \algref{alg1} is the optimal planning for the problem proposed in \lref{l2}, it is sufficient to show that any solution that violates the properties above is not optimal. Finally, as clusters can be seen as new nodes in a network without pre-deployed OF links, the result of the first part of the section guarantee the optimality of the solution.

Assume that in the optimal solution, a connection exists between two nodes in the same cluster. As these nodes are connected via pre-deployed OF link through single or multi-hop connection, then removing the extra connection produces a feasible solution at a lower cost. Hence, the optimal solution contains only the pre-deployed OF links connecting BSs in each cluster.

Similarly, assume there exists a node connected to a cluster in the optimal solution that is not linked to the BS with minimal cost in that cluster. It can be readily seen that as the nodes in the cluster are all connected, then removing the connection and replacing it with the minimal cost one produces a feasible solution at a lower cost. Therefore, such connection does not exist in the optimal solution. Finally, the solution generated by \algref{alg1} is the optimal planning for the problem proposed in \lref{l2}.

\section{Proof of \lref{l4}}\label{ap4}

To prove this lemma, we prove that any solution to \eref{Approximate_optimization_problem} is a feasible solution to \eref{Original_optimization_problem}. We first show that constraint \eref{eq:nr2} is equivalent to constraint \eref{eq:nr}. In the original problem formulation \eref{Original_optimization_problem}, the reliability constraint is:
\begin{align}
1 - \prod_{j=1}^M (1 - X_{ij}\alpha)(1 - Y_{ij} R^{(h)}_{ij}) \geq \alpha.
\label{eq:ap41}
\end{align}

It can easily be seen that \eref{eq:ap41} is satisfied, for node $b_i$ if and only if at least one of the following options is correct:
\begin{enumerate}
\item $\sum_{j=1}^M X_{ij} > 0 \Rightarrow$ Node $b_i$ is connected to another node with an OF link which provides full reliability.
\item $\sum_{j \in \mathcal{R}_i}^M Y_{ij} > 0 \Rightarrow$ Node $b_i$ is connected to another node with a hybrid RF/FSO link to a nearby node that provides full reliability.
\item $1 - \prod_{j \in \overline{\mathcal{R}}_i}(1 - Y_{ij} R^{(h)}_{ij}) \geq \alpha \Rightarrow$ Node $b_i$ is connected to a sufficiently large number of nodes to satisfy the reliability constraint.
\end{enumerate}

In the rest part of the proof, we show that constraint \eref{eq:nr2} regroups the three scenarios mentioned above. Applying a logarithmic transformation to the third option and rearranging the terms yields the following expression:
\begin{align}
\sum_{j \in \overline{\mathcal{R}}_i}\log (1-Y_{ij} R^{(h)}_{ij}) \leq \log(1-\alpha).
\end{align}
Given that the reliability of nodes $j \notin \mathcal{R}_i$ is small, i.e., $R^{(h)}_{ij} \lll 1$, then $Y_{ij} R^{(h)}_{ij} \lll 1$, $\forall \ j \notin \mathcal{R}_i$. Applying a first order Taylor expansion of the logarithm yield the condition:
\begin{align}
\sum_{j \in \overline{\mathcal{R}}_i} -Y_{ij} R^{(h)}_{ij} \leq \log(1-\alpha).
\label{eq:ap42}
\end{align}

Let $\tilde{\alpha} = \cfrac{1}{\log(1-\alpha)}$. Rearranging the terms of \eref{eq:ap42} and including the first and second option discussed in the previous paragraph gives the following constraint:
\begin{align}
\sum_{j=1}^M X_{ij}\tilde{\alpha} + \sum_{j \in \mathcal{R}_i}Y_{ij}\tilde{\alpha} + \sum_{j \in \overline{\mathcal{R}}_i}Y_{ij} R^{(h)}_{ij} \geq \tilde{\alpha}.
\label{eq:ap43}
\end{align}

It can clearly be seen that setting $\sum_{j=1}^M X_{ij} > 0$, and/or $\sum_{j \in \mathcal{R}_i}^M Y_{ij} > 0$, and/or $1 - \prod_{j \in \overline{\mathcal{R}}_i}(1 - Y_{ij} R^{(h)}_{ij}) \geq \alpha$ automatically satisfy \eref{eq:ap43} which proves that constraint \eref{eq:nr2} is equivalent to constraint \eref{eq:nr}.

We, now, show that constraint \eref{eq:14} is included in constraint \eref{eq:6} as it is the second and last constraint changing from one formulation to the other. Constraint \eref{eq:14} ensures that, for all connections $\overline{X}_{ij}=1$ that are generated by \algref{alg1}, a similar connections (OF or hybrid RF/FSO link) between nodes $b_i$ and $b_j$ must exist. For connections $\overline{X}_{ij}=0$, the constraint is always satisfied and connection may or may not exist. From \thref{th1}, \algref{alg1} produces a connected graph. In other words, $\lambda_2 > 0$. Therefore, constraint \eref{eq:14} is included in constraint \eref{eq:6}. A feasible solution to \eref{Approximate_optimization_problem} is, therefore, a feasible solution to \eref{Original_optimization_problem}. In \thref{th2}, we show that the optimal solution to \eref{Approximate_optimization_problem} is the optimal solution to \eref{Original_optimization_problem} in many scenarios (but not all). Therefore, the approximation of problem \eref{Original_optimization_problem} by the problem \eref{Approximate_optimization_problem} is tight.

\section{Proof of \thref{th2}}\label{ap5}

In this theorem, we show that the optimal solution $X^*_{ij},Y^*_{ij},\ 1 \leq i,j \leq M$ to \eref{Approximate_optimization_problem} should satisfy $X^*_{ij}+Y^*_{ij}=1$ only if $(i,j) \in \mathcal{N}_j \times \mathcal{N}_i$.

\begin{remark}
Note that if any feasible solution $X_{ij},Y_{ij}$ to the general problem \eref{Original_optimization_problem} that verify \thref{th2}, then the solution is feasible to \eref{Approximate_optimization_problem}. In other words, if $X_{ij},Y_{ij}$ feasible to \eref{Original_optimization_problem} and $X_{ij}+Y_{ij}=1$ only if $(i,j) \in \mathcal{N}_j \times \mathcal{N}_i$, then $(X_{ij}+Y_{ij})\overline{X}_{ij}=\overline{X}_{ij}$. This can be easily concluded given the construction of $\mathcal{N}_i, \leq i \leq M$ as the minimum set of nodes that can generate a connected graph. Since all $(x,y)$ such that $\overline{X}_{xy}=1$ are at the edge of at least one of the $\mathcal{N}_i$, then the only connected solution that satisfies \thref{th2} is feasible to \eref{Approximate_optimization_problem}. In that scenario, the optimal solution of \eref{Original_optimization_problem} and \eref{Approximate_optimization_problem} are the same.
\end{remark}

Assume $\exists (x,y)$ such that $X^*_{xy}+Y^*_{xy}=1$ and $(x,y) \notin \mathcal{N}_y \times \mathcal{N}_x$. Two scenarios can be distinguished:
\begin{itemize}
\item $\mathcal{B} \setminus \{b_x,b_y\}$ represents a connected subgraph.
\item $\mathcal{B} \setminus \{b_x,b_y\}$ is not a connected subgraph.
\end{itemize}

For the first scenario, consider the reduced network $\{b_x,b_y,\tilde{b}_k\}$. Clearly, we have $b_{x^*},b_{y^*} \in \tilde{b}_k$. Define the following planning:
\begin{align}
\tilde{X}_{ij} &=
\begin{cases}
1 \hspace{1cm} &\text{if } i=x \text{ and } j=x^* \\
1 \hspace{1cm} &\text{if } j=y \text{ and } j=y^* \\
0 \hspace{1cm} &\text{if } i=x \text{ and } j=y \\
X^*_{ij} \hspace{1cm} &\text{otherwise}
\end{cases} \nonumber \\
\tilde{Y}_{ij} &=
\begin{cases}
0 \hspace{1cm} &\text{if } i=x \text{ and } j=x^* \\
0 \hspace{1cm} &\text{if } j=y \text{ and } j=y^* \\
0 \hspace{1cm} &\text{if } i=x \text{ and } j=y \\
Y^*_{ij} \hspace{1cm} &\text{otherwise},
\end{cases}
\end{align}
in which the connection between $b_x$ and $b_y$ is replaced by two connections between $b_x$ and $b_{x^*}$ and between $b_y$ and $b_{y^*}$. We can clearly see that the network is connected. Moreover, since $b_x$ and $b_y$ are connected with an OF link, then the data rate constraint is satisfied for both nodes. Therefore, $\tilde{X}_{ij},\tilde{Y}_{ij},\ 1 \leq i,j \leq M$ represents a feasible solution. Moreover, The difference in cost between the optimal planning $X^*_{ij},Y^*_{ij}$ and the planning $\tilde{X}_{ij},\tilde{Y}_{ij}$ is lower bounded by:
\begin{align}
&\pi(b_x,b_y) - (\pi^{(O)}(b_x,b_{x^*}) + \pi^{(O)}(b_y,b_{y^*}))   \\
&\geq \pi^{(h)}(b_x,b_y) - (\pi^{(O)}(b_x,b_{x^*}) + \pi^{(O)}(b_y,b_{y^*})) \geq 0.\nonumber
\end{align}
From assumption \eref{eq:12}, the difference is positive. This concludes that $X^*_{i,j},Y^*_{i,j},\ 1 \leq i,j \leq M$ is not the optimal solution.
\begin{remark}
For scenario 1, $X^*_{i,j},Y^*_{i,j},\ 1 \leq i,j \leq M$ can be the optimal solution to the original problem \eref{Original_optimization_problem}. Hence, for this configuration, the optimal solution of \eref{Original_optimization_problem} and \eref{Approximate_optimization_problem} are the same.
\end{remark}

For scenario 2, let the network be reduced to $\{b_x,b_y,\tilde{b}_k,\tilde{b}_l\}$ with $b_x$ connected to $\tilde{b}_k$, which is a connected subgraph, $b_y$ connected $\tilde{b}_l$, which is a connected subgraph, and $\tilde{b}_k$ and $\tilde{b}_l$ are not connected. Since $X^*_{i,j},Y^*_{i,j},\ 1 \leq i,j \leq M$ is a feasible solution to \eref{Approximate_optimization_problem}, then it satisfies constraint \eref{eq:14}. In other words, $X^*_{k,l} + Y^*_{k,l}=1, \ \forall \ k,l$ such that $\overline{X}_{kl} =1$. Note that $\overline{X}_{xy} = 0$. Otherwise, by construction of the neighbours sets, we have $(x,y) \in \mathcal{N}_y \times \mathcal{N}_x$. Define the planning $\tilde{X}_{ij},\tilde{Y}_{ij}$ such that
\begin{align}
\tilde{X}_{ij} &=
\begin{cases}
0 \hspace{1cm} &\text{if } i=x \text{ and } j=y \\
X^*_{ij} \hspace{1cm} &\text{otherwise}
\end{cases} \nonumber \\
\tilde{Y}_{ij} &=
\begin{cases}
0 \hspace{1cm} &\text{if } i=x \text{ and } j=y \\
Y^*_{ij} \hspace{1cm} &\text{otherwise}.
\end{cases}
\end{align}

The planning $\tilde{X}_{ij},\tilde{Y}_{ij}$ satisfy \eref{eq:14}. However, we can clearly see that the graph is not connected. Therefore, $X^*_{i,j},Y^*_{i,j},\ 1 \leq i,j \leq M$ is not a feasible solution. This concludes that scenario 2 is not feasible. Finally, we conclude that the optimal solution $X^*_{i,j},Y^*_{i,j},\ 1 \leq i,j \leq M$ to \eref{Approximate_optimization_problem} should satisfy $X^*_{xy}+Y^*_{xy}=1$ only if $(x,y) \in \mathcal{N}_y \times \mathcal{N}_x$.
\begin{remark}
Scenario 2 can be a feasible scenario if $X^*_{i,j},Y^*_{i,j},\ 1 \leq i,j \leq M$ is the optimal solution to the original problem \eref{Original_optimization_problem}. In that case, two scenarios can be distinguished:
\begin{itemize}
\item $b_{x^*} \in \mathcal{N}_y$ and $b_{y^*} \in \mathcal{N}_x$. In that case $\tilde{X}_{ij},\tilde{Y}_{ij}$ presented for scenario 1 produces a feasible solution with lower cost. Therefore, the optimal solution of \eref{Original_optimization_problem} and \eref{Approximate_optimization_problem} are the same.
\item $b_{x^*} \notin \mathcal{N}_y$ or $b_{y^*} \notin \mathcal{N}_x$. In this configuration, no conclusion can be reached about the optimal solution of \eref{Original_optimization_problem} and \eref{Approximate_optimization_problem} is an upper bound of the minimum of \eref{Original_optimization_problem}.
\end{itemize}
Therefore, the approximation of the problem \eref{Original_optimization_problem} by the problem \eref{Approximate_optimization_problem} is a tight approximation.
\end{remark}

\section{Proof of \thref{th3}}\label{ap6}

To proof this theorem, we first prove that there is a one to one mapping between the set of feasible solution of a modified version of problem \eref{Approximate_optimization_problem} and the set of cliques of degree $M$ in the \emph{planning} graph $\mathcal{G}(\mathcal{V},\mathcal{E})$. To conclude the proof, we show that the weight of the clique is equivalent to the merit function of the optimization problem \eref{Approximate_optimization_problem}.

From \thref{th2}, we have $X_{ij}=0$ and $Y_{ij}=0,\ \forall \ (b_i,b_j) \notin \mathcal{N}_j \times \mathcal{N}_i $. Hence the objective function \eref{eq:25} and constraint \eref{eq:23} can be replaced by:
\begin{align}
&\max -\cfrac{1}{2} \sum_{i=1}^M \sum_{j \in \mathcal{N}_i} X_{ij}\pi^{(O)}_{ij} + Y_{ij}\pi^{h}_{ij} .
\end{align}

Similarly, since $X_{ij}=0$ and $Y_{ij}=0,\ \forall \ (b_i,b_j) \notin \mathcal{N}_j \times \mathcal{N}_i $, then $(X_{ij},Y_{ij}) = (X_{ji},Y_{ji})$ is always verified for $(b_i,b_j) \notin \mathcal{N}_j \times \mathcal{N}_i $. Hence the constraints \eref{eq:20}, \eref{eq:21}, and \eref{eq:22} can be replaced by:
\begin{align}
&(X_{ij},Y_{ij}) = (X_{ji},Y_{ji}) ,\ \forall \ (b_i,b_j) \in \mathcal{N}_j \times \mathcal{N}_i \nonumber \\
&X_{ij}Y_{ij} = 0,\ \forall \ (b_i,b_j) \in \mathcal{N}_j \times \mathcal{N}_i \nonumber \\
&\sum_{j \in \mathcal{N}_i} X_{ij}D_t + Y_{ij}D^{(h)}_{ij} \geq 1,\ D_t \leq i \leq M.
\end{align}

By definition of the set $\mathcal{N}_i$, we have $\overline{X}_{ij}=0, \forall \ j \notin \mathcal{N}_i$. Therefore, constraint \eref{eq:14} may be written as:
\begin{align}
&(X_{ij} + Y_{ij})\overline{X}_{ij} = \overline{X}_{ij}, \forall \ j \in \mathcal{N}_i, 1 \leq i \leq M.
\end{align}

The problem \eref{Approximate_optimization_problem} can be reformulated as:
\begin{align}
\max &-\cfrac{1}{2} \sum_{i=1}^M \sum_{j \in \mathcal{N}_i} X_{ij}\pi^{(O)}_{ij} + Y_{ij}\pi^{h}_{ij} \nonumber \\
\text{subject to } &(X_{ij},Y_{ij}) = (X_{ji},Y_{ji}) ,\ \forall \ (b_i,b_j) \in \mathcal{N}_j \times \mathcal{N}_i \nonumber \\
&X_{ij}P_{ij} = P_{ij} ,\ \forall \ (b_i,b_j) \in \mathcal{N}_j \times \mathcal{N}_i \nonumber \\
&X_{ij}Y_{ij} = 0,\ \forall \ (b_i,b_j) \in \mathcal{N}_j \times \mathcal{N}_i \nonumber \\
&\sum_{j \in \mathcal{N}_i} X_{ij}D_t + Y_{ij}D^{(h)}_{ij} \geq D_t,\ 1 \leq i \leq M \nonumber \\
&\sum_{j \in \mathcal{N}_i \cap \mathcal{R}_i} (X_{ij}+Y_{ij})\tilde{\alpha} + \sum_{j \in \in \mathcal{N}_i \cap\overline{\mathcal{R}}_i}Y_{ij} R^{(h)}_{ij} \geq \tilde{\alpha} \nonumber \\
&(X_{ij} + Y_{ij})\overline{X}_{ij} = \overline{X}_{ij}, \forall \ j \in \mathcal{N}_i, 1 \leq i \leq M \nonumber \\
&X_{ij},Y_{ij} \in \{0,1\},\ 1 \leq i,j\leq M.
\end{align}

Let $\gamma_i = \{(X_{ij_1},Y_{ij_1}),\ \cdots,\ (X_{ij_{|\mathcal{N}_i|}},Y_{ij_{|\mathcal{N}_i|}})\}$ be the new variable. Using the variables $\gamma_i$ and the definition of the sets $\mathcal{C}_i$, the problem can be written as:
\begin{subequations}
\begin{align}
\max &\sum_{i=1}^M w(\gamma_i) \nonumber \\
\text{subject to } &\gamma_i \in \mathcal{C}_i, 1 \leq i \leq M \label{eq:31} \\
&(X_{ij},Y_{ij}) = (X_{ji},Y_{ji}) ,\ \forall\ 1 \leq i \neq j \leq M. \label{eq:30}
\end{align}
\label{eq:32}
\end{subequations}

Let $\mathbb{C}$ be the set of cliques of degree $M$ in the \emph{planning} graph and let $\mathbb{F}$ be the set of feasible solutions to the optimization problem \eref{eq:32}. We first proof that any clique $C=\{v_1,\ \cdots,\ v_M\} \in \mathbb{C}$ satisfy constraints \eref{eq:31}, and \eref{eq:30}. Then, we prove the converse. In other words, for element in $\mathbb{F}$, there exists a clique in $\mathbb{C}$.

Let $C=\{v_1,\ \cdots,\ v_M\} \in \mathbb{C}$. Assume $\exists \ k,i,j$ such that $v_i,v_j \in \mathcal{C}_k$. Since all the vertices in a clique are connected, then from the connectivity condition C1, vertices $v_i$ and $v_j$ are not connected. Hence $\nexists \ k,i,j$ such that $v_i,v_j \in \mathcal{C}_k$. Given that the clique contain $M$ elements, then constraint \eref{eq:31} is satisfied. The connectivity condition C2 ensures that $(X_{ij},Y_{ij}) = (X_{ji},Y_{ji})$ for all vertices. Therefore, $C$ is a feasible solution to \eref{eq:32}. Similarly, let $\{c_1,\ \cdots,\ c_M\}$ be a feasible solution to \eref{eq:32}, then clearly the vertices corresponding to each cluster are connected. Finally, there a one to one mapping between $\mathbb{C}$ and $\mathbb{F}$. Moreover, the weight of the clique is $w(C) = \sum_{i=1}^M w(\gamma_i) $ which conclude that the solution of \eref{eq:32} is the maximum weight clique, among the clique of size $M$, in the \emph{planning} graph.

\bibliographystyle{IEEEtran}
\bibliography{citations}

\end{document}